\documentclass[11pt,a4paper]{article}

%% Packages
%% ========
\usepackage[OT1]{fontenc}
\usepackage{lmodern}
\usepackage[english]{babel}
\usepackage[utf8x]{inputenc}
\usepackage[sc]{mathpazo}
\usepackage{amsmath,amssymb,amsfonts,mathrsfs,amstext}
\usepackage[thmmarks]{ntheorem}
\usepackage{wrapfig}

%% See the TeXed file for more explanations

%% [OPT] Multi-rowed cells in tabulars
%\usepackage{multirow}

%% [REC] Intelligent cross reference package. This allows for nice
%% combined references that include the reference and a hint to where
%% to look for it.
\usepackage{varioref}

%% [OPT] Easily changeable quotes with \enquote{Text}
%\usepackage[german=swiss]{csquotes}

%% [REC] Format dates and time depending on locale
\usepackage{datetime}

%% [OPT] Provides a \cancel{} command to stroke through mathematics.
%\usepackage{cancel}

%% [NEED] This allows for additional typesetting tools in mathmode.
%% See its excellent documentation.
\usepackage{mathtools}

%% [ADV] Conditional commands
%\usepackage{ifthen}

%% [OPT] Manual large braces or other delimiters.
%\usepackage{bigdelim, bigstrut}

%% [REC] Alternate vector arrows. Use the command \vv{} to get scaled
%% vector arrows.
%\usepackage[h]{esvect}

%% [NEED] Some extensions to tabulars and array environments.
\usepackage{array}

%% [OPT] Postscript support via pstricks graphics package. Very
%% diverse applications.
%\usepackage{pstricks,pst-all}

%% [?] This seems to allow us to define some additional counters.
%\usepackage{etex}

%% [ADV] XY-Pic to typeset some matrix-style graphics
%\usepackage[all]{xy}

%% [OPT] This is needed to generate an index at the end of the
%% document.
%\usepackage{makeidx}

%% [OPT] Fancy package for source code listings.  The template text
%% needs it for some LaTeX snippets; remove/adapt the \lstset when you
%% remove the template content.
\usepackage{listings}
\lstset{language=TeX,basicstyle={\normalfont\ttfamily}}

%% [REC] Fancy character protrusion.  Must be loaded after all fonts.
%\usepackage[activate]{pdfcprot}

%% [REC] Nicer tables.  Read the excellent documentation.

%% Theorem-like environments

%% This can be changed according to language. You can comment out the ones you
%% don't need.

\numberwithin{equation}{section}

%% German theorems
%\newtheorem{satz}{Satz}[chapter]
%\newtheorem{beispiel}[satz]{Beispiel}
%\newtheorem{bemerkung}[satz]{Bemerkung}
%\newtheorem{korrolar}[satz]{Korrolar}
%\newtheorem{definition}[satz]{Definition}
%\newtheorem{lemma}[satz]{Lemma}
%\newtheorem{proposition}[satz]{Proposition}

%% English variants
\newtheorem{theorem}{Theorem}[section]

\newtheorem{remark}[theorem]{Remark}

\newtheorem{lemma}[theorem]{Lemma}
\newtheorem{proposition}[theorem]{Proposition}

%% Proof environment with a small square as a "qed" symbol
\theoremstyle{nonumberplain}
\theorembodyfont{\normalfont}
\theoremsymbol{\ensuremath{\blacksquare}}
\newtheorem{proof}{Proof}
%\newtheorem{beweis}{Beweis}

%% Special characters for number sets, e.g. real or complex numbers.

%% Fixed/scaling delimiter examples (see mathtools documentation)

%% Use the alternative epsilon per default and define the old one as \oldepsilon

\renewcommand{\epsilon}{\ensuremath\varepsilon}

%% Also set the alternate phi as default.

\renewcommand{\phi}{\ensuremath{\varphi}}

\usepackage[linkcolor=black,colorlinks=true,citecolor=black,filecolor=black]{hyperref}

\usepackage{algpseudocode}
\usepackage{algorithm}

\usepackage{color}
\usepackage{caption}
\usepackage{subcaption}
\captionsetup{textfont = sl} 
\usepackage[noabbrev]{cleveref}
\usepackage{csquotes}
\usepackage{bm}
\usepackage{mathtools}
\usepackage{nicefrac}
\usepackage{booktabs}

\newcommand\mathmidscript[1]{\vcenter{\hbox{$\scriptstyle #1$}}}

\newcommand{\pvint}{\ensuremath{\;\mathmidscript{\raisebox{-0.9ex}{\tiny{p.\,v.\;}}}\!\!\!\!\!\!\!\!\int}}

\DeclareMathAlphabet{\mathpzc}{OT1}{pzc}{m}{it}

\newcommand*\Laplace{\mathop{}\!\mathbin\bigtriangleup}

\newcommand{\NORM}[1]{\left\lVert#1\right\rVert} %already defined somewhere

\newcommand{\DEF}{\coloneqq}

\newcommand{\RR}{\mathbb{R}}
\newcommand{\CC}{\mathbb{C}}
\newcommand{\NN}{\mathbb{N}}

\newcommand{\Om}{\Omega}

\newcommand{\del}{\partial}

\newcommand{\MID}{\!\! \mid\!}

\newcommand{\eps}{\epsilon}

\newcommand{\OO}{\mathcal{O}}

\newcommand{\intd}{\mathrm{d}}

\newcommand{\calA}{\mathcal{A}}

\newcommand{\Leu}{\mathrm{L}}

%% Document information
%% ====================

\title{Wave Enhancement through Optimization of Boundary Conditions}
\author{ Habib Ammari\thanks{\footnotesize Department of Mathematics, 
ETH Z\"urich, 
R\"amistrasse 101, CH-8092 Z\"urich, Switzerland (habib.ammari@math.ethz.ch, kthim.imeri@sam.math.ethz.ch).} \and Oscar Bruno\thanks{Computing \& Mathematical Sciences, California Institute of Technology, 1200 E California Blvd, Pasadena, CA 91125, USA (obruno@caltech.edu).} \and Kthim Imeri\footnotemark[1] \and Nilima Nigam \thanks{\footnotesize Department of Mathematics, Simon Fraser University, 8888 University Dr, Burnaby, BC V5A 1S6, Canada (nigam@math.sfu.ca).} }
\date{}

\begin{document}
	\maketitle

\begin{abstract}
	It is well-known that changing boundary conditions for the Laplacian from Dirichlet to Neumann can result in significant changes to the associated eigenmodes, while keeping the eigenvalues close.
We present a new and efficient approach for optimizing the transmission signal between two points in a cavity at a given frequency, by changing boundary conditions.   The proposed approach makes use of 
recent results on the monotonicity of the eigenvalues of the mixed boundary value problem and on the sensitivity of the Green's function to small changes in the boundary conditions. The switching of the boundary condition from Dirichlet to Neumann can be performed through the use of the recently modeled concept of metasurfaces which are comprised of coupled pairs of Helmholtz resonators.  A variety of numerical experiments are presented to show the applicability and the accuracy of the proposed new methodology. 
\end{abstract}

\def\keywords2{\vspace{.5em}{\textbf{  Mathematics Subject Classification
(MSC2000).}~\,\relax}}
\def\endkeywords2{\par}
\keywords2{{35R30, 35C20.}}

\def\keywords{\vspace{.5em}{\textbf{ Keywords.}~\,\relax}}
\def\endkeywords{\par}
\keywords{{Zaremba eigenvalue problem, boundary integral operators, mixed boundary conditions, metasurfaces.}}

%\tableofcontents
%\mainmatter

\section{Introduction}\label{Ch:Introduction}

This paper develops a new and efficient approach for maximizing  the transmission signal between two points at a chosen frequency through changes to specific eigenmodes of the cavity. These changes are achieved by changing  the  boundary conditions. The eigenmodes and the associated eigenfrequencies of a cavity are sensitively dependant on the geometric properties of the domains, as well as the location of Dirichlet and Neumann boundary conditions.  Many recent works have been devoted to the understanding of the effect of changing the boundary condition on the eigenmodes and the eigenfrequencies \cite{EldarBruno2018, Nigam2014, add2,add3,  HarrellEvans2006, add4, LotoreichikRohleder, add6}.

Through the use of a tunable
reflecting metasurface, the boundary condition can be switched from Dirichlet to
Neumann at some specific resonant frequencies \cite{HRMetasurfaceOnArxiv}. 
In \cite[Part I]{HRMetasurfaceOnArxiv}, the physical mechanism underlying the concept of tunable metasurfaces is modeled both mathematically and numerically. It is shown that an array of coupled pairs of Helmholtz resonators behaves as an equivalent surface with Neumann boundary condition at some specific subwavelength resonant frequencies, where the size of one pair of Helmholtz resonators is much smaller than the wavelengths at
the resonant frequencies. The Green's function of a cavity with mixed (Dirichlet and Neumann) boundary conditions (called also the Zaremba function) is also characterized. In \cite[Part II]{HRMetasurfaceOnArxiv}, a one\--shot optimization algorithm is proposed and used to obtain a good initial guess for the positions around which the boundary conditions should be switched from Dirichlet to Neumann.

In this paper,  we present a new methodology for maximizing the Zarem\-ba function between two points at a chosen frequency through specific eigemodes of the cavity.  The paper is organized as follows. 
In Section \ref{Ch:Prelim}, we first recall some useful results on the eigenvalues of the mixed boundary value problem (called Zaremba eigenvalue problem). Of particular interest is their monotonicity property with respect to the size of the Neumann part proven in \cite{LotoreichikRohleder}. Then we reformulate the eigenvalue problem using boundary integral operators. Based on this nonlinear formulation and the use of the generalized argument principle for the characterization of the characteristic values of finitely meromorphic operator-valued functions of Fredholm-type, we  derive an accurate asymptotic formula of the  changes of eigenfrequencies of a cavity with mixed boundary conditions in terms of the size of the part of the cavity boundary  where the boundary condition is switched from Dirichlet to Neumann. Finally, we recall the asymptotic expansion of the  Zaremba function in terms of the size of the Neumann part. 
The problem of changing a portion of a Dirichlet boundary to Neumann is more delicate than the
converse. If a portion of the boundary is changed from having Neumann conditions to Dirichlet, the reverse consideration than in this paper, then an asymptotic expansion of the eigenvalues is easier to derive \cite{add3, add6}. 
The perturbation theory for the introduction of Neumann boundaries  requires a careful consideration of the asymptotic
behaviour of the Zaremba near the perturbation \cite[Part II]{HRMetasurfaceOnArxiv}. In Section \ref{Ch:DecompZarembaFct}, we derive a spectral decomposition of the Zaremba function. In Section \ref{Ch:Algorithm}, we consider the  problem where we have a source in a bounded domain operating at a given frequency, and we want to
determine, by exploiting the monotonicity property of the eigenvalues of the mixed boundary value problem, which part of the boundary to choose to be reflecting such that an eigenvalue of the mixed boundary value problem gets close enough to the operating frequency. In order to significantly enhance the signal at a given receiving point, both the emitter and the receiver should not belong to the nodal set corresponding to the eigenmode associated with the eigenvalue of the mixed boundary value problem. 

There are two distinct issues: {\it where} to place the Neumann boundary condition, and {\it how long} it should be, to achieve the twin objectives of maximizing gain between a fixed source-receiver pair, and at a frequency close to a desired one.

Our main idea is to first nucleate the Neumann boundary conditions in order to maximize gain of the Zaremba function by making use of an asymptotic expansion of the Zaremba function in terms of the size of the Neumann part. Then the size of the Neumann part is changed in such away that an eigenvalue of the mixed boundary value problem gets close to the operating frequency by using the monotonicity property of the eigenvalues of the mixed eigenvalue problem. The optimization needs the high-accuracy evaluation of certain boundary integral operators, and this is done using techniques from \cite{EldarBruno2018, Nigam2014}.

We present in Section \ref{Ch:NumImplTest}  some numerical experiments to show the applicability and the accuracy of the proposed methodology. 

\section{Preliminaries}\label{Ch:Prelim}

\newcommand{\GDir}{{\Gamma_{\textrm{D}}}}
\newcommand{\GNeu}{{\Gamma_{\textrm{N}}}}
\newcommand{\GDel}{{\Gamma_{\Delta}}}
\makeatletter
\def\moverlay{\mathpalette\mov@rlay}
\def\mov@rlay#1#2{\leavevmode\vtop{%
   \baselineskip\z@skip \lineskiplimit-\maxdimen
   \ialign{\hfil$\m@th#1##$\hfil\cr#2\crcr}}}
\newcommand{\charfusion}[3][\mathord]{
    #1{\ifx#1\mathop\vphantom{#2}\fi
        \mathpalette\mov@rlay{#2\cr#3}
      }
    \ifx#1\mathop\expandafter\displaylimits\fi}
\makeatother

\newcommand{\cupdot}{\charfusion[\mathbin]{\cup}{\cdot}}
\newcommand{\bigcupdot}{\charfusion[\mathop]{\bigcup}{\cdot}}
\newcommand{\OmBar}{\overline{\Om}}
\newcommand{\xS}{x_{\textrm{S}}}
\newcommand{\ZxS}{\mathrm{Z}^k_{\xS}}
\newcommand{\ZxSD}{\mathrm{Z}_{\mathrm{D}, \xS}^k}
\newcommand{\ZxSN}{\mathrm{Z}_{\mathrm{N}, \xS}^k}
\newcommand{\lgdir}{\lambda^{\GDir}}
\newcommand{\lgfull}{\lambda^{\del\Om}}
\newcommand{\lgempt}{\lambda^{\varnothing}}
\newcommand{\Hank}[1]{\mathrm{H}^{(1)}_{#1}}
\newcommand{\Sobo}{{H}}
\newcommand{\Sobominhalf}{{H}^{-\nicefrac{1}{2}}}
\newcommand{\Soboplushalf}{{H}^{\nicefrac{1}{2}}}
\newcommand{\Hminhalftilde}{\widetilde{H}^{-\nicefrac{1}{2}}}
\newcommand{\HminhalftildeNull}{\widetilde{H}^{-\nicefrac{1}{2}}_{\langle 0\rangle}}
\newcommand{\Hplushalfast}{{H}^{\nicefrac{1}{2}}_{\ast}}
\newcommand{\SGDk}{\mathcal{S}_\GDir^k}
\newcommand{\SGNk}{\mathcal{S}_\GNeu^k}
\newcommand{\KGNkstar}{(\mathcal{K}^{k}_\GNeu)^{\ast}}
\newcommand{\KGGkstar}{(\mathcal{K}^{k}_{\del\Om})^{\ast}}
\newcommand{\dSGDk}{\del\mathcal{S}^{k}_{\GDir}}
\newcommand{\Gk}{\Gamma^k}
\newcommand{\calAk}{\mathcal{A}(k)}
\newcommand{\calAe}{\mathcal{A}_\eps}
\newcommand{\calBe}{\mathcal{B}_\eps}
\newcommand{\calAz}{\mathcal{A}_0}
\newcommand{\psiDN}{\begin{bmatrix} \psi\MID_{\GDir}\\ \psi\MID_{\GNeu}\end{bmatrix}}
\newcommand{\calAkz}{\mathcal{A}_0(k)}
\newcommand{\calAke}{\mathcal{A}_\eps(k)}
\newcommand{\kjzer}{k_j^0}
\newcommand{\kjeps}{k_j^\eps}

\subsection{Laplace Eigenvalue with Mixed Boundary Conditions}\label{subsection:LaplEV}

\begin{wrapfigure}{r}{0.42\textwidth}
  \centering
  \includegraphics[width=0.34\textwidth]{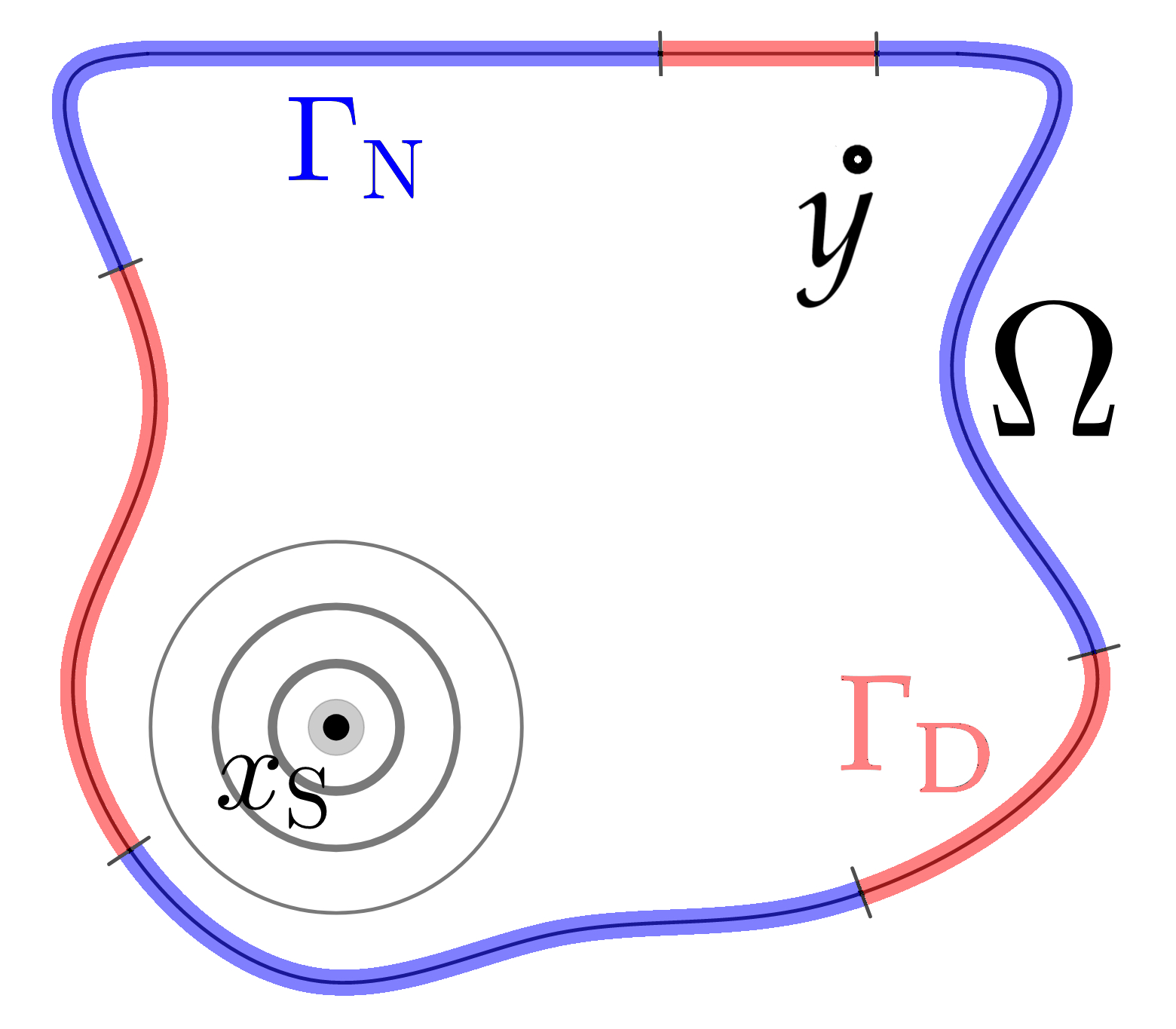}
  \vspace{-10pt}
  \caption{$\GNeu$ is marked in blue and $\GDir$ in red.}\label{fig:PreLimDomain}
\end{wrapfigure}

Let $\Om\subset\RR^2$ be an open, bounded domain with a smooth boundary. We define $\OmBar$ as the topological closure of $\Om$. We decompose the boundary $\del\Om\DEF\OmBar\setminus\Om$ into two parts, $\del\Om=\overline{\GDir \cupdot \GNeu}$, where $\GDir$ and $\GNeu$ are finite unions of open boundary sets. We define $(\GDir,\GNeu)$ to be a partition of $\del\Om$.
%Define splitting of Gamma_D and Gamma_N if necessary like in nilimas paper
Let $\xS\in\Om$ and $k\in(0,\infty)$. The {\it Zaremba function} $\ZxS(x_s,\cdot): \Om\setminus\{\xS\}\rightarrow \RR$ is the Green's function to the Zaremba problem, also known as the fundamental Helmholtz equation with mixed boundary conditions,
\begin{align}\label{pde:ZarembaFunda}
	\left\{ 
	\begin{aligned}
		 \left( \Laplace + k^2  \right) \ZxS(\xS,y)   &= \delta_0(\xS-y) \quad &&\text{for} \; &&y\in\Om\,, \\
		 \ZxS(\xS,y) 								  &= 0 \quad &&\text{for} \; &&y\in\GDir \,,\\
		 \del_{\nu_y} \ZxS(\xS,y) 					  &= 0 \; &&\text{for} \; &&y\in\GNeu. \,
	\end{aligned}
	\right.
\end{align}
Here $\nu_y$ denotes the outer normal at $y\in\del\Om$ and $\del_{\nu_y}$ the normal derivative at $y\in\del\Om$. It is clear that we can write 
$$ 
	Z^k(x_s, \cdot )= \Gamma^k(x_s,\cdot) + \mathrm{R}^k(x_s,\cdot)
$$ 
where $\Gamma^k(x,y)\DEF\frac{i}{4}H_0^1(k\,|x-y| )$ is the fundamental solution of the Helmholtz problem with wavenumber $k$, and $\mathrm{R}^k(x_s,\cdot)$ is a smooth function satisfying the boundary value problem

\begin{align}\label{pde:ZarRemainder}
\left\{ 
	\begin{aligned}
		\left( \Laplace + k^2  \right) \mathrm{R}^k(x_s,y) &=0, \, &&\mbox{in } \,\Omega\,,\\
  		\mathrm{R}^k(x_s,y) 		       				   &=-\Gamma^k(x_s,y), \, &&\mbox{on } \Gamma_D\,, \\   
  		\del_{\nu_y}\mathrm{R}^k(x_s,y) 				   &=-\del_{\nu_y} \Gamma^k(x_s,y) &&\mbox{on }\Gamma_N\,.
	\end{aligned}
	\right.
\end{align}

In Section \ref{Ch:DecompZarembaFct}, we will see that $\ZxS$ exists for all but countably many values of $k$, which are related to the unique solvability of the problem for $\mathrm{R}^k(\xS,\cdot)$. These exceptional values of $k$ are the eigenvalues to the associated Laplace eigenvalue problem with mixed boundary conditions
\begin{align}\label{pde:ZarembaHom}
	\left\{ 
	\begin{aligned}
		 -\Laplace      u		&= k^2 \, u \quad &&\text{in} \; &&\Om\,, \\
		 u      				&= 0 \quad &&\text{on} \; &&\GDir \,,\\
		 \del_{\nu_y}  u		&= 0 \quad &&\text{on} \; &&\GNeu \,.
	\end{aligned}
	\right.
\end{align}
Equation (\ref{pde:ZarembaHom}) has a non-trivial solution $u\in \mathrm{H}^1(\Om)$ for a countable set of real values of $k^2$  \cite[Theorem 4.10]{McLeanEllitpicSystems}, which we refer to as $\{\lgdir_j\}_{j=1}^\infty$, so that $\lgdir_1\leq\lgdir_2\leq\lgdir_3\leq\ldots\,$. We know that $\lgdir_1\geq 0$ and that $\lim_{j\rightarrow\infty}\lgdir_j=\infty$ for all partitions $(\GDir,\GNeu)$ of $\del\Om$.

We denote by $\{\lgfull_j\}_{j\in\NN}$ the pure Dirichlet eigenvalues for $\Omega$,  corresponding to the case $\GDir=\del\Omega$. We let $\{\lgempt_j\}_{j\in\NN}$ denote the Neumann eigenvalues associated to the case $\GNeu=\del\Om$. Then we have 
\begin{align*}
	 0 &< \lgfull_1\,,\quad &&\lgfull_1<\lgfull_2\,, \quad  &&\lgfull_2\leq\lgfull_j\,,\;\forall j\geq3,\\
	 0 &= \lgempt_1\,,\quad &&\lgempt_1<\lgempt_2\,, \quad  &&\lgempt_2\leq\lgempt_j\,,\;\forall j\geq3.
\end{align*}
In \cite{Filonov}, it is shown that $\lgempt_{j+1}<\lgfull_j$, for all $j\in\NN$, for a very general class of domains $\Om$.

\begin{remark}
	Let $\Om$ be the unit circle, we have that $\{\lgfull_j\}_{j=1}^\infty$ is (up to sorting) equal to 
	\begin{align*}
		\{ \alpha^2\in(0,\infty)\MID \exists n\in\NN_0 : \alpha \textrm{ is positive root of } J_n(x) \} \,,
	\end{align*}
	where $J_n$ is a Bessel function of the first kind and order $n$. The eigenvalues corresponding to the roots of $J_0$ appear as simple Dirichlet eigenvalues; all others have multiplicity two. and $\{\lgempt_j\}_{j=2}^\infty$ is (up to sorting) equal to 
	\begin{align*}
		\{ \alpha^2\in(0,\infty)\MID \exists n\in\NN_0 : \alpha \textrm{ is positive root of } J^\prime_n(x) \} \,.
	\end{align*}
	Again, the eigenvalues corresponding to the roots of $J'_0$ appear as simple Neumann eigenvalues; all others have multiplicity two. We refer to \cite{LaplEigenValuesFunctions}.
\end{remark}

Recently, Lotoreichik and Rohleder \cite[Proposition 2.3]{LotoreichikRohleder} showed the following monotonicity statement.
\begin{proposition}\label{prop:lambda<lambda'}
	Let $(\GDir, \GNeu), (\GDir',\GNeu')$ be two partitions of $\del\Om$, such that $\GDir\subset \GDir'$. If $\GDir'\setminus\GDir$ has a non-empty interior then
	\begin{align*}
		\lgdir_j<\lambda^{\Gamma_{D^\prime}}_j\quad \text{ for all } j\in\NN. 
	\end{align*}	 
\end{proposition}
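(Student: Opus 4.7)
My plan is to combine the variational (Courant--Fischer) characterization of the eigenvalues with a Holmgren-type unique continuation argument. First I would introduce the natural form domain $H^1_{\GDir}(\Om) \DEF \{u \in H^1(\Om) : u|_{\GDir} = 0\}$ for the Zaremba problem associated with the partition $(\GDir, \GNeu)$. The hypothesis $\GDir \subset \GDir'$ gives the form-domain inclusion $H^1_{\GDir'}(\Om) \subset H^1_{\GDir}(\Om)$, and since the Dirichlet integral $u \mapsto \int_\Om |\nabla u|^2$ is given by the same expression on both spaces, the min-max principle
$$\lgdir_j = \min_{\substack{V \subset H^1_{\GDir}(\Om) \\ \dim V = j}} \; \max_{u \in V \setminus \{0\}} \frac{\int_\Om |\nabla u|^2}{\int_\Om u^2}$$
would immediately yield the weak inequality $\lgdir_j \leq \lambda^{\Gamma_{D^\prime}}_j$ for every $j$. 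My task is then to upgrade this to a strict inequality, exploiting that $\GDir' \setminus \GDir$ has nonempty interior.

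I would argue by contradiction. Assuming $\lgdir_j = \lambda^{\Gamma_{D^\prime}}_j =: \mu$ for some $j$, let $u_1, \ldots, u_j$ denote an $L^2$-orthonormal family of Zaremba eigenfunctions of the partition $(\GDir',\GNeu')$ corresponding to its first $j$ eigenvalues, and set $V \DEF \mathrm{span}(u_1, \ldots, u_j)$. Then $V$ is a $j$-dimensional subspace of $H^1_{\GDir'}(\Om) \subset H^1_{\GDir}(\Om)$ on which the Rayleigh quotient attains maximum value $\mu$, so $V$ realizes the minimum in the Courant--Fischer characterization of $\lgdir_j$. A standard equality-case argument (using that such an optimal subspace is invariant under the self-adjoint operator associated to the looser form) would then show that $u_j$ is also an eigenfunction of the $(\GDir, \GNeu)$ Zaremba problem with eigenvalue $\mu$.

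This common eigenfunction $u \DEF u_j$ would then satisfy $-\Laplace u = \mu u$ in $\Om$ together with $u = 0$ on $\GDir'$ (inherited from the stricter partition) and $\del_{\nu_y} u = 0$ on $\GNeu$ (inherited from the looser partition). The key observation is that the open arc $U \DEF \mathrm{int}(\GDir' \setminus \GDir)$, which is nonempty by hypothesis, is contained in $\GDir' \cap \GNeu$, so both Cauchy data $u|_U$ and $\del_{\nu_y} u|_U$ vanish. Holmgren's theorem (equivalently Aronszajn's unique continuation principle) applied to $-\Laplace u = \mu u$ on the smooth domain $\Om$ would then force $u \equiv 0$, contradicting the fact that $u$ is a nontrivial eigenfunction and thereby proving the claimed strict monotonicity.

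The main obstacle is the middle step: converting the equality of variational eigenvalues into the existence of a function that is simultaneously an eigenfunction of \emph{both} mixed problems. When $\mu$ is a simple eigenvalue of the stricter problem, this follows immediately from the uniqueness of the min-max minimizer; when $\mu$ has higher multiplicity, some care is needed to exhibit an element of the $\mu$-eigenspace of the stricter problem that also satisfies the Euler--Lagrange equation for the looser one (for instance, by decomposing a suitable $u_j$ in the orthonormal basis of looser eigenfunctions and showing that contributions from eigenvalues strictly less than $\mu$ must vanish). Once this common eigenfunction is secured, the Holmgren unique continuation step is entirely standard on a smooth boundary.
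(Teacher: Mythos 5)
The paper does not actually prove this proposition: it is quoted verbatim from Lotoreichik and Rohleder \cite[Proposition 2.3]{LotoreichikRohleder}, so there is no internal proof to compare against. Your argument is, in substance, the standard one (and essentially the one in the cited reference): form-domain inclusion $H^1_{\GDir'}(\Om)\subset H^1_{\GDir}(\Om)$ plus min--max gives $\lgdir_j\le\lambda^{\Gamma_{D'}}_j$, and the equality case is excluded by producing a function with vanishing Cauchy data on the open arc $U=\mathrm{int}(\GDir'\setminus\GDir)\subset \GDir'\cap\GNeu$ and invoking unique continuation. The overall strategy is sound.

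The one step you should tighten is the extraction of the critical function, since as written (``$u_j$ is also an eigenfunction of the $(\GDir,\GNeu)$ problem'') it is not correct: a subspace $V$ realizing the min--max need not be spanned by eigenfunctions of the looser operator, and the maximizer of the Rayleigh quotient on $V$ is only a constrained critical point, so $u_j$ itself may fail to solve the looser problem. The clean fix, which you half-gesture at, is the following: with $V=\mathrm{span}(u_1,\dots,u_j)$ and $v_1,\dots,v_{j-1}$ the first $j-1$ eigenfunctions of the \emph{looser} problem, dimension counting gives a nonzero $u\in V$ with $(u,v_i)_{L^2}=0$ for $i<j$; then the max--min principle forces $R(u)\ge\lgdir_j$ while $R(u)\le\max_V R=\mu$, so $R(u)=\mu$, and expanding $u=\sum_{i\ge j}c_iv_i$ the identity $\sum_{i\ge j}(\lgdir_i-\mu)c_i^2=0$ with nonnegative terms kills all components with $\lgdir_i>\mu$. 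Thus $u$ is a genuine eigenfunction of the looser problem (hence $\del_\nu u=0$ on $\GNeu\supset U$) that also lies in $H^1_{\GDir'}(\Om)$ (hence $u=0$ on $\GDir'\supset U$); note you never need $u$ to be an eigenfunction of the stricter problem, only that its trace vanishes on $\GDir'$. With that correction the Holmgren/unique-continuation step (extend $u$ by zero across $U$ and use interior unique continuation for $-\Laplace u=\mu u$) finishes the proof exactly as you describe.
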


\begin{figure}[h]
  \begin{subfigure}{0.49\textwidth}
    \centering
    \includegraphics[width=0.6\textwidth]{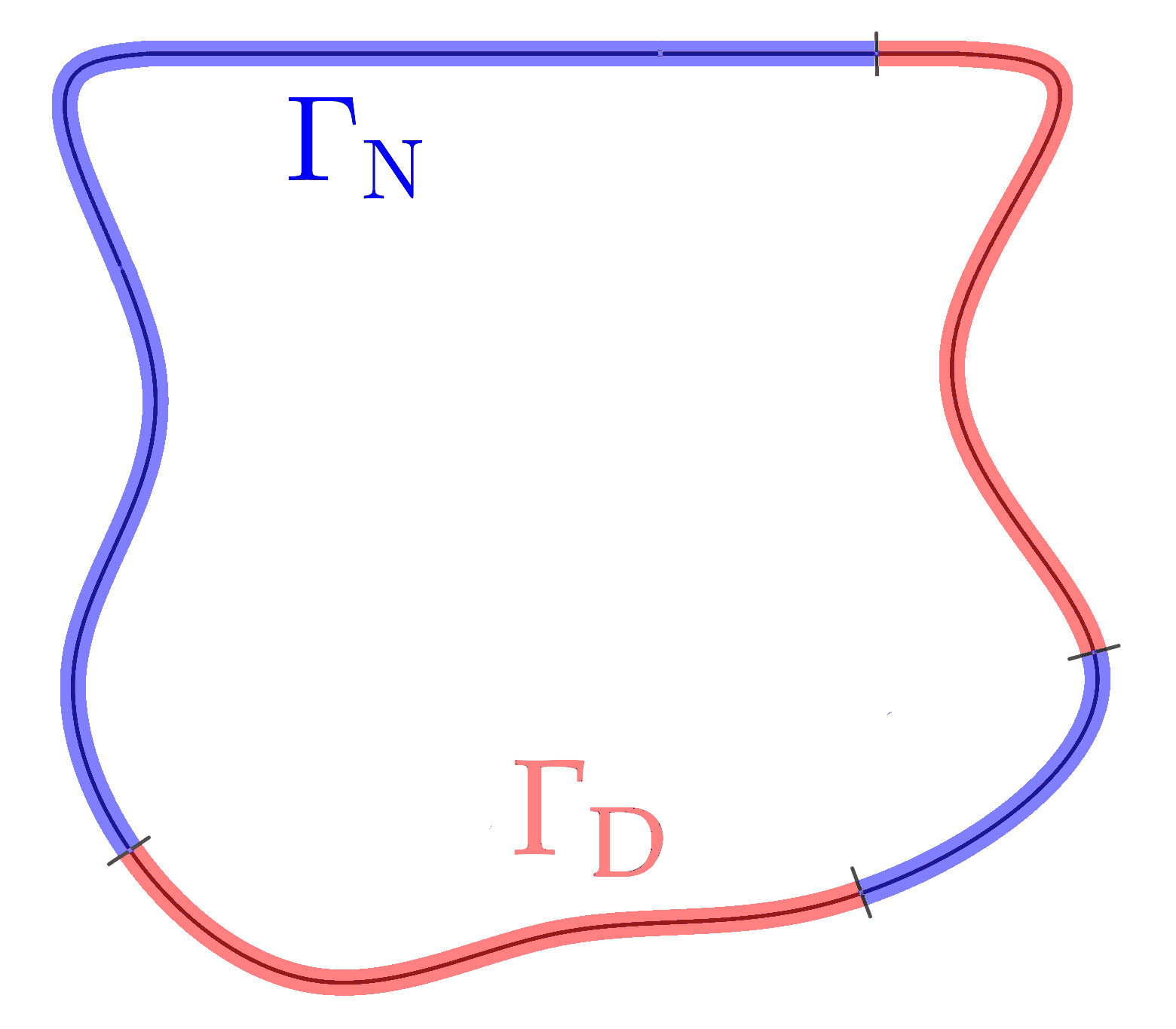}
    %\caption[CAP1]{TEXT.}
    %\label{fig:DiriCirc1}
  \end{subfigure}\hfill %half fill
  \begin{subfigure}{0.49\textwidth} 
    \centering
    \includegraphics[width=0.6\textwidth]{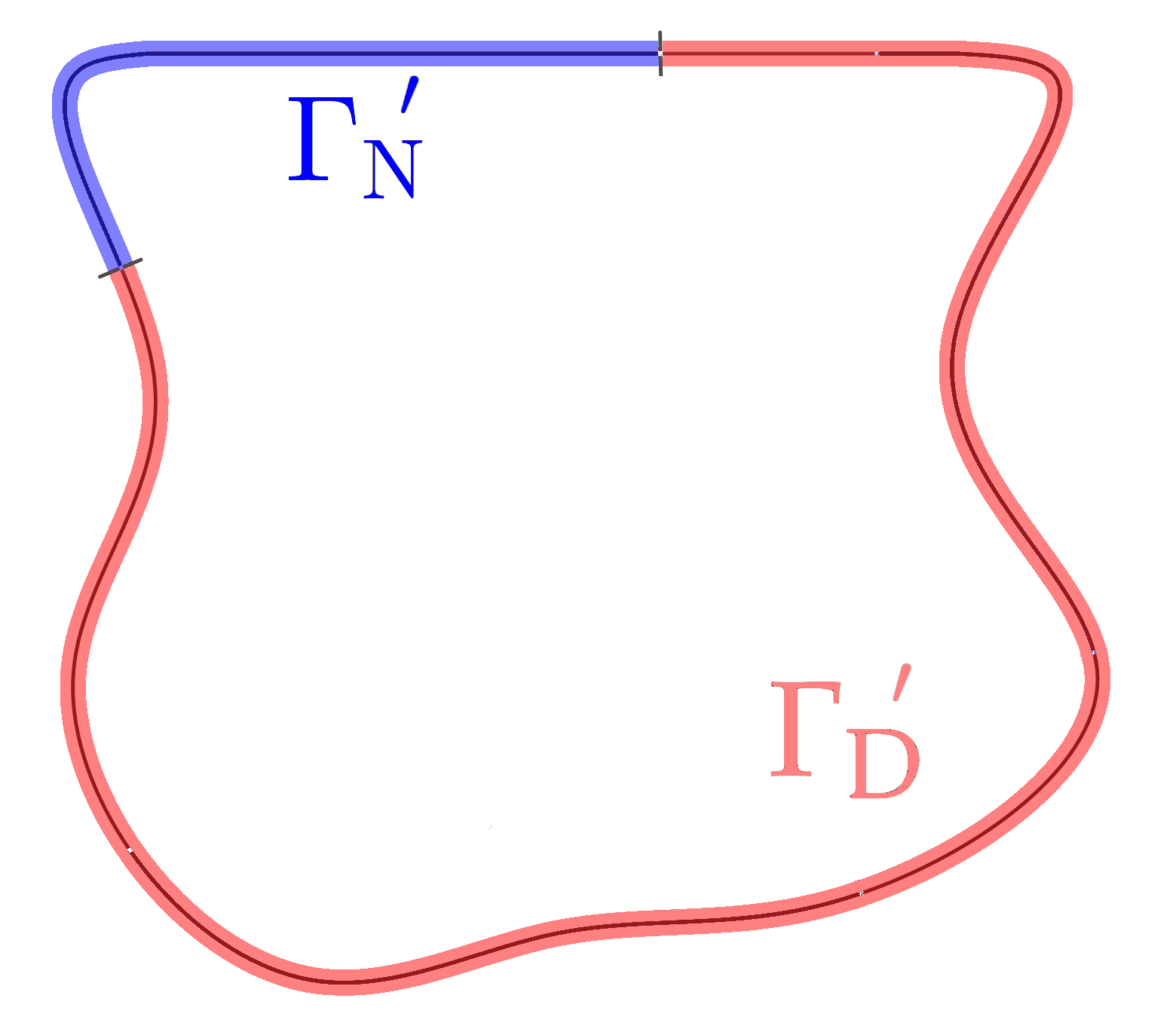}
    %\caption[CAP1]{TEXT.}
    %\label{fig:EndCirc1}
  \end{subfigure}
  \caption{An illustrative example of the two partitions mentionned in Proposition \ref{prop:lambda<lambda'}. On the left-hand side we have the parition $(\GDel, \GNeu)$ and on the right-hand side $(\GDel',\GNeu')$. They satisfy the condition $\GDel\subset\GDel'$ and that $\GDel'\setminus\GDel$ has a non-empty interior.}%\label{fig:Circ1i}
\end{figure}

With Proposition \ref{prop:lambda<lambda'}, we can readily infer that if $\varnothing\neq \GDir$, and $\overline{\GDir}\neq \del\Om$, then
\begin{align*}
	\lgempt_j < \lgdir_j < \lgfull_j\,,\quad \text{ for all } j\in\NN.
\end{align*}

\subsection{Boundary Integral Formulation of the Eigenvalue Problem}\label{subsec:BdryOperatortoEigProb}
The solution $u$ of the eigenvalue(\ref{pde:ZarembaHom}) can be represented by a single layer potential
\begin{align}
	u(x) = \int_{\del\Om}\Gk(x,y)\psi(y)\intd\sigma_y\,,
\end{align}
with surface density $\psi\in L^2(\del\Om)$.

We define then the operators 
$\SGDk: H^{-1/2}(\GDir) \to H^{1/2}(\GDir)$, 
$\SGNk: H^{-1/2}(\GNeu) \to H^{-1/2}(\GDir)$,
$\KGNkstar: H^{-1/2}(\GNeu) \to H^{-1/2}(\GNeu)$ and 
$\dSGDk: H^{-1/2}(\GDir) \to H^{-1/2}(\GNeu)$ by
\begin{align*}
    \SGDk[\psi](x) &\!\DEF\! \int _{\GDir}\Gk(x,y)\psi (y)\intd\sigma_y\,, 
		&& \hspace{-5pt} \SGNk[\psi](x)     \!\DEF\! \int _{\GNeu}\Gk(x,y)\psi (y)\intd\sigma_y\,, \\
    \dSGDk[\psi](x) &\!\DEF\! \!\int _{\GDir}\!\!{\del_{\nu_x}}\Gk(x,y)\psi (y)\intd\sigma_y\,,  
    	&& \hspace{-5pt} \KGNkstar[\psi](x) \!\DEF\! \pvint _{\GNeu}\!\!{\del_{\nu_x}}\Gk(x,y)\psi (y)\intd\sigma_y \,,
\end{align*}
where the 'p.v.' stands for the principle value integral. This   actually is the standard (Lebesgue-) integral for a smooth curved $\GNeu$, since $\del_\nu\Gk$ is a bounded and sufficiently smooth integral operator kernel. From \cite[Chapter 11]{SaranenVainikko2002} we have that $\SGDk$ is a Fredholm operator with index 0, we also readily infer that $\KGNkstar$, $\dSGDk$, and $\SGNk$ are compact operator. 
%\ki{I am not sure whether these operators are really compact in exactly those spaces. For $\KGNkstar$ and $\dSGDk$ it is very probable, but for $\SGNk$ I am not sure.} 
%\ki{I hope this is good enough. The reference you gave me which is from Ralf and Carlos does not consider $k\neq 0$ which in my opinion would have led to less precise statements than what I stated here above.}

We then define $\calAk: H^{-1/2}(\GDir)\times H^{-1/2}(\GNeu)\to H^{1/2}(\GDir)\times
H^{-1/2}(\GNeu)$ in terms of these integral operators through
\begin{align}
	\calAk \psiDN \DEF
	\begin{bmatrix}
		\SGDk & \SGNk \\
		\dSGDk & -\frac{1}{2}\mathrm{I}_{L^2_\omega(\GNeu)}+\KGNkstar
	\end{bmatrix}\psiDN \,. \label{equdef:calAk}
%	&= \begin{bmatrix}
%		\SGDk [ \psi\MID_\GDir] + \SGDk[\psi\MID_\GNeu] \\
%		\dSGDk [ \psi\MID_\GDir] - \frac{1}{2}\psi\MID_\GNeu+\KGNkstar[\psi\MID_\GNeu]
%	\end{bmatrix}.
\end{align}

We readily see that $\calAk$ is an analytic Fredholm operator of index $0$ in $\CC \setminus i \RR^-$. %In what follows we may write $\calAk$ as $\calA^k$.

To locate the Zaremba eigenvalues, we have the following statement:
\begin{align}\label{statement:calAk=0}
\mbox{``The real positive characteristics values of the operator-valued function} \nonumber
\\ \mbox{ $k\mapsto\calAk$ are the square roots of the Zaremba eigenvalues''}. 
\end{align}
In \cite[Section 3]{Nigam2014} and \cite{EldarBruno2018}, it is shown that every square root of a Zaremba eigenvalue is a real positive characteristic value of $k\mapsto\calAk$ and every real positive characteristic value of $k\mapsto\calAk$ is the square root of a Zaremba eigenvalue. 

%\textcolor{red}{What was shownin these papers is different. Using a single layer representation we defined the operators $\mathcal{A}^{(1)}:
%	H^{-1/2}(\Gamma) \to H^{1/2}(\Gamma_D)$ and $\mathcal{A}^{(2)}:
%	H^{-1/2}(\Gamma) \to H^{-1/2}(\Gamma_N)$ by
%	\begin{eqnarray}
%	\mathcal{A}^{(1)}_\mu[ \psi](x)  &=&  \displaystyle \int _{\Gamma}G_\mu(x,y)\psi (y)ds_y \qquad \mbox{for } x \in \Gamma_D, \\
%	\mathcal{A}^{(2)}_\mu[ \psi](x)& =&  \displaystyle -\frac{\psi (x)}{2}+\int _{\Gamma}\frac{\partial}{\partial
%		n_x}G_\mu(x,y)\psi (y)ds_y  \qquad \mbox{for } x \in \Gamma_N,content...
%	\end{eqnarray}
%	and we then definen 
%	\begin{equation}
%	\mathcal{A}_\mu =: H^{-1/2}(\Gamma)\to H^{1/2}(\Gamma_D)\times
%	H^{-1/2}(\Gamma_N)\qquad{\mbox by}\qquad \mathcal{A}_\mu [ \psi] =
%	(\mathcal{A}^{(1)}_\mu [ \psi] ,\mathcal{A}^{(2)}_\mu [ \psi] ) .
%	\end{equation} We dealt with the characteristic values of $\mathcal{A}_\mu$.
% CRUCIALLY we did not need both double layer and its adjoint. So I wanted to check: in THIS paper, do we need the form of $\calAk$ as you currently have it, or do you mean that $\calAk \equiv \mathcal{A}_\mu$ as used in my previous paper? The code is based on the previous paper, I think. I do not think this change makes any difference in the following analysis.}
%
We see that $\calAk$ is invertible for $k\in(0,\infty)$ not a square root of a Zaremba eigenvalue.

%We remark here that Statement (\ref{statement:calAk=0}) might also be valid for non-real values $k\in\CC\setminus{(-\infty,0]}$, where the cut is due to the logarithm in the zeroth order expansion of $\Gk(x,y)$ for $|x-y|$ near $0$. 

We remark that the non-real characteristic values of $k\mapsto\calAk$  cannot correspond to eigenvalues to the Laplace equation. This yields the undesirable, but avoidable, difficulty in choosing a neighbourhood $V$ to apply Proposition \ref{prop:keps-ko good approx} in our algorithm, see also Section \ref{Ch:Algorithm}, comment on Line 13. 

The Statement (\ref{statement:calAk=0}) allows for a discretization and thus a numerical approximation of the value $k$. We will use this further on. For these facts, we refer to \cite[Sections 3 and 5]{Nigam2014}.

Let us also consider the regularity of the solution $u$ and the density $\psi$ near a Dirichlet-Neumann junction. The following result can be found in \cite[Theorems 4.2 and 4.3]{EldarBruno2018}.
\begin{proposition}  \label{prop:asympt for u and phi}
Let $\GDir, \GNeu$ be non-empty. Let $k>0$ and $\psi$ satisfy the Statement (\ref{statement:calAk=0}). Let $y_\star\in\overline{\GDir}\cap\overline{\GNeu}$. Then there exists a neighborhood $\mathcal{U}\subset\RR^2$ around $y_\star$ such that for all $y\in\mathcal{U}$ and for all $n\in\NN$
\begin{align*}
	u(y) 
		&= \mathrm{P}^n_{y_\star}(z^{1/2},\overline{z}^{1/2}) + o(z^{n})\,,\\
	\psi\MID_\GDir(y) 
		&= |z|^{-1/2} \mathrm{Q}_{\mathrm{D}, y_\star}^{n}(|z|^{1/2}) + o(|z|^{n-1})\,,  \\
    \psi\MID_\GNeu(y) 
   		&= |z|^{-1/2} \mathrm{Q}_{\mathrm{N}, y_\star}^{n}(|z|^{1/2}) + o(|z|^{n-1})\,,
\end{align*}
where $z\in\CC$ is the complexification of $y-y_\star$, that is $z = \left(y_1-(y_\star)_1\right) + i \left(y_2-(y_\star)_2\right)$ with $i$ being the imaginary unit, and $\overline{z}$ being its conjugate value, and where $P^n_{y_\star}, \mathrm{Q}_{\mathrm{D}, y_\star}^{n}, \mathrm{Q}_{\mathrm{N}, y_\star}^{n}$ are polynomial functions of their respective arguments and of a degree such that none of their terms can be included in their respective error terms.
\end{proposition}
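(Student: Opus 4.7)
The plan is to localize the problem at the junction $y_\star$, reduce to a canonical model on a half-disk where Dirichlet and Neumann parts meet at a corner, obtain the $u$-expansion by separation of variables and a Frobenius/Bessel argument for the Helmholtz term, and finally transfer this expansion to the density $\psi$ via the single-layer representation.

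First, I would localize. Choose a small neighborhood $\mathcal{U}$ around $y_\star$ and a smooth cutoff $\chi$ supported in $\mathcal{U}$. Because $\del\Om$ is smooth, I can straighten the boundary by a smooth diffeomorphism $\Phi$ defined on $\mathcal{U}$, under which $\Om$ is mapped to the half-disk $\{r<r_0,\ 0<\theta<\pi\}$, the arc $\GDir\cap\mathcal{U}$ to the ray $\theta=\pi$, and $\GNeu\cap\mathcal{U}$ to the ray $\theta=0$ (the roles are interchanged for the other corner). Under $\Phi$, the Helmholtz operator $\Laplace+k^2$ becomes $\Laplace+k^2+L$, where $L$ is a differential operator of order $\leq 2$ whose coefficients vanish at $y_\star$. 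The transported $u$ satisfies this perturbed Helmholtz equation with mixed Dirichlet/Neumann conditions on the two rays and is smooth up to the rays away from $y_\star$.

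Second, I would build the asymptotic basis on the model sector. Separation of variables in polar coordinates $(r,\theta)$ for the Laplace equation with mixed Dirichlet--Neumann boundary yields the angular eigenfunctions $\cos\!\bigl((m+1/2)\theta\bigr)$, $m\in\NN_0$, and the radial factors $r^{m+1/2}$, which are polynomials in $z^{1/2}$ and $\overline{z}^{1/2}$ since $z^{1/2}=r^{1/2}e^{i\theta/2}$. Replacing $\Laplace$ by $\Laplace+k^2$ turns each mode into a Bessel-type function $J_{m+1/2}(kr)\cos\!\bigl((m+1/2)\theta\bigr)$, whose Frobenius expansion adds only even integer powers of $r$. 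Absorbing the lower-order perturbation $L$ by an inductive matching, I obtain, for each $n$, a polynomial $\mathrm{P}^n_{y_\star}(z^{1/2},\overline{z}^{1/2})$ whose degree is at most $2n+1$ in $r^{1/2}$ such that $u-\mathrm{P}^n_{y_\star}=o(z^n)$ near $y_\star$. This is essentially the corner-asymptotics theorem of Kondratiev type for mixed boundary conditions.

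Third, I would pass the expansion to the density. Using the representation $u=\SGDk[\psi\MID_{\GDir}]+\SGNk[\psi\MID_{\GNeu}]$ and the coupled identity on $\GNeu$ involving $\dSGDk$ and $\tfrac12 \rmI-\KGNkstar$, I treat the system as a Mellin-type pseudodifferential equation on the two rays meeting at $y_\star$. The known mapping properties of the single-layer Mellin symbol on a corner show that a single-layer density of the form $|z|^{-1/2}$ times a polynomial in $|z|^{1/2}$ generates exactly a trace of the form $\mathrm{P}^n_{y_\star}(z^{1/2},\overline{z}^{1/2})$; inverting this correspondence term-by-term, I read off polynomials $\mathrm{Q}^n_{\mathrm{D},y_\star}(|z|^{1/2})$ and $\mathrm{Q}^n_{\mathrm{N},y_\star}(|z|^{1/2})$ that match the expansion of $u$ up to order $n$, with remainders of size $o(|z|^{n-1})$.

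The hard part will be the rigorous control of the remainders and the matching step. The Helmholtz kernel differs from the Laplace kernel only by a smooth contribution, but the single-layer operators $\SGDk,\SGNk,\KGNkstar,\dSGDk$ restricted to a pair of curves meeting at $y_\star$ act between weighted spaces where the $|z|^{-1/2}$ singularity must be handled with care; this is where the Mellin calculus or the equivalent explicit computations on the half-disk carried out in \cite[Theorems 4.2 and 4.3]{EldarBruno2018} provide the needed bijectivity between polynomial asymptotics of $u$ and of $\psi$.
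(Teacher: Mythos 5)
The paper does not actually prove Proposition \ref{prop:asympt for u and phi}: it imports the statement verbatim from \cite[Theorems 4.2 and 4.3]{EldarBruno2018}, so there is no internal proof to compare against. Your outline --- localize at $y_\star$, flatten to a half-disk with $\GDir$ and $\GNeu$ mapped to the two rays, obtain the half-integer-exponent expansion of $u$ by separation of variables (Wigley/Kondratiev-type corner asymptotics for the interior angle $\pi$ with mixed conditions), then transfer to the density through the layer-potential system --- is the same route taken in that reference, and is the standard one.

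Two points in your sketch deserve to be made explicit rather than left to the ``hard part''. First, the absence of logarithmic terms is not automatic in corner asymptotics; it holds here because the homogeneous exponents are the half-integers $m+\tfrac12$, and the corrections generated by $k^2$ and by the lower-order operator $L$ shift these by even, respectively positive, integers, so no exponent from the forcing ever collides with a homogeneous exponent of the model problem. Without saying this, the claim that $\mathrm{P}^n_{y_\star}$ is a genuine polynomial in $z^{1/2},\overline{z}^{1/2}$ (no $\log|z|$ factors) is unsupported. Second, the transfer to $\psi$ cannot be done by ``inverting the single-layer trace correspondence'' on $\GDir$, since $u$ vanishes identically there and carries no information about $\psi\MID_\GDir$; the expansion of the density must come either from the jump relations for the normal derivative of the single-layer potential (i.e., $\psi$ is the jump of $\del_\nu u$ across $\del\Om$, each one-sided limit having its own $|z|^{1/2}$-expansion whose radial derivative produces the $|z|^{-1/2}\mathrm{Q}(|z|^{1/2})$ form), or from a Mellin analysis of the full $2\times 2$ system in (\ref{equdef:calAk}) involving $\SGDk$, $\SGNk$, $\dSGDk$ and $-\tfrac12\rmI+\KGNkstar$ simultaneously, which is what the cited theorems do. With those two repairs your argument is a faithful reconstruction of the proof the paper relies on.
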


\begin{figure}[h]
    \centering
    \includegraphics[width=0.6\textwidth]{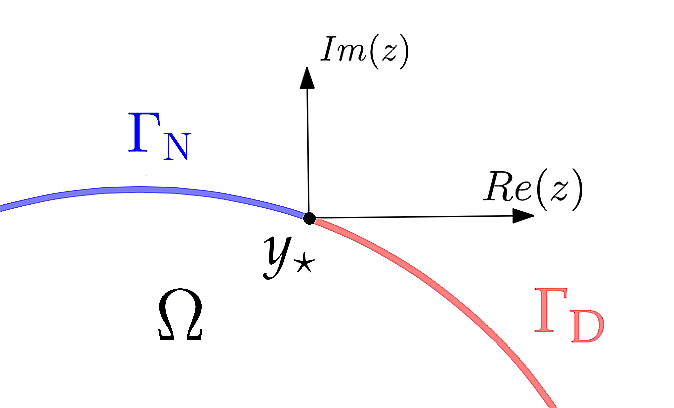}
  \caption{An illustration of the setup used in Proposition \ref{prop:asympt for u and phi}. $z$ is defined as the complexification of a $\RR^2$-vector, with the origin at $y_\star$.}\label{fig:PolynomialDesc}
\end{figure}

\subsection{Approximation of the Zaremba Eigenvalue using the Generalized Argument Principle}\label{subsection:EValDiffwithGenArgPrinc}

In this section we derive asymptotic expressions for the perturbation of the Zaremba eigenvalues, when a small portion of the boundary is changed from Dirichlet to Neumann. 

\begin{figure}[h]
    \centering
    \includegraphics[width=0.35\textwidth]{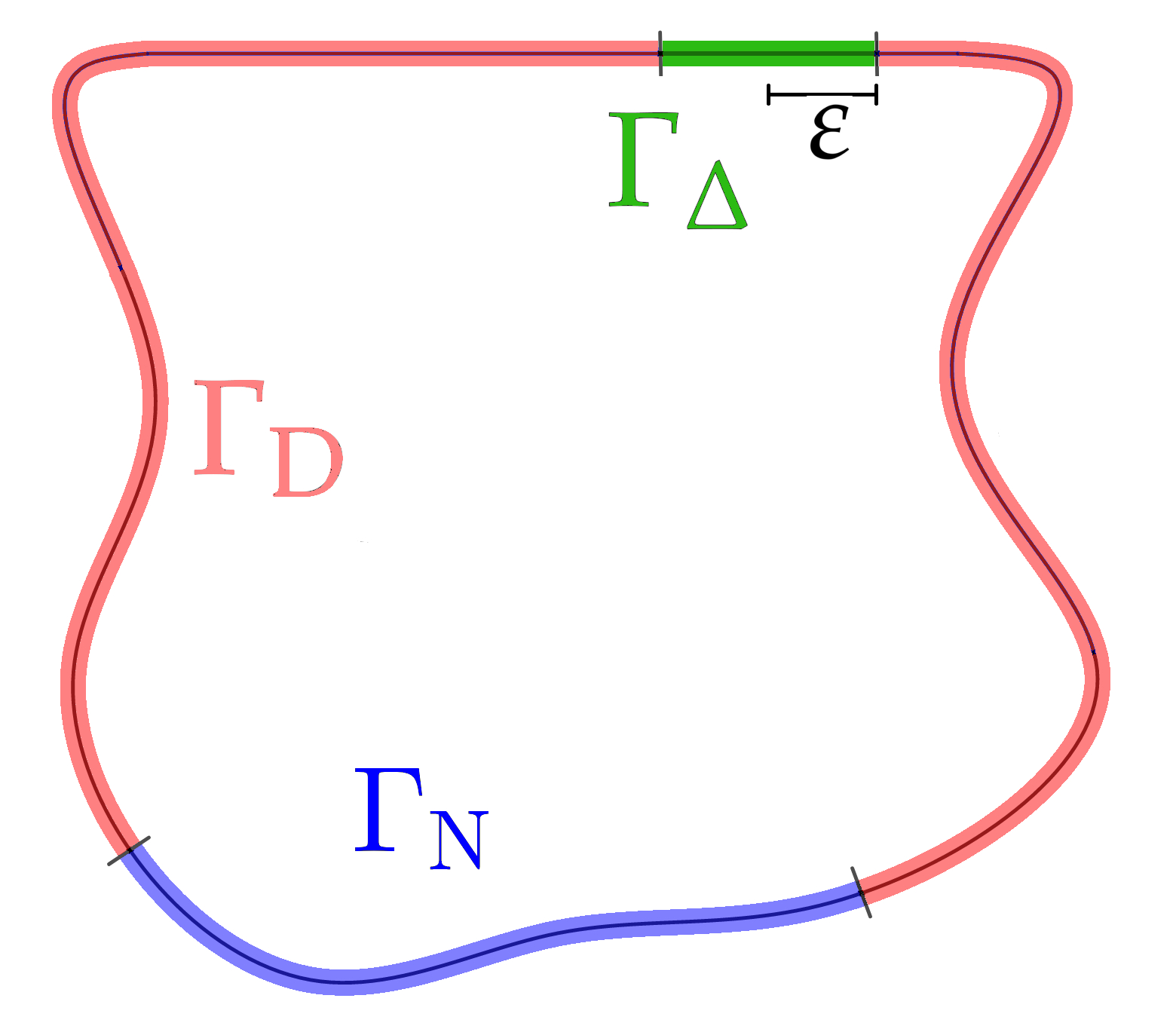}
  \caption{An example for a domain with a Neumann boundary and a Dirichlet boundary and a small straight arc $\GDel$ of length $2\eps$. We associate $\kjzer$ with $\GDel$ being a Dirichlet boundary and $\kjeps$ with $\GDel$ being a Neumann boundary.}
\end{figure}

Let $\GDel\subset \del\Om$ be a boundary interval of length $2\eps$. Let $(\GDir\cupdot\GDel, \GNeu)$ be a partition of $\del\Om$. We  associate the operator $\calAkz$, defined via (\ref{equdef:calAk}), to that partition. This corresponds to $\GDel$ having a Dirichlet boundary condition. Then we define $\calAe(k)$, also by obvious changes in the integrals in (\ref{equdef:calAk}), to be the operator associated to the partition $(\GDir, \GNeu\cupdot\GDel)$. This in turn corresponds to $\GDel$ being a  Neumann part.
For ease of notation, we define $\kjzer\DEF\sqrt{\lambda^{\GDir\cupdot\GDel}_j}$ and $\kjeps\DEF\sqrt{\lambda^{\GDir}_j}$ for all $j\in\NN$, and call those characteristic values to their respective operators. From \cite[Lemma 3.8]{LPTSA} we then have the following Lemma:

\begin{lemma}\label{lemma:GenArgPri Exact}
	Let $\kjzer$ be a simple characteristic value. Let $V\subset\CC$ be a neighbourhood of $\kjzer$, such that $\kjeps\in V$. Assume further that no other square root of Zaremba eigenvalue to the partition $(\GDir, \GNeu\cupdot\GDel)$ of $\del\Om$ is in  $\overline{V}$. Then $\kjeps$ is given by the contour integral
	\begin{align*}
		\kjeps-\kjzer 
			= \frac{1}{2\pi i}\,\mathrm{tr}
				\int_{\del V} (\omega-\kjzer)\calAe(\omega)^{-1} \del_\omega \calAe(\omega)\,\intd \omega\,.
	\end{align*}
\end{lemma}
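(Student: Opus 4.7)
The plan is to identify this as a direct application of the generalized argument principle for finitely meromorphic Fredholm operator-valued functions, exactly as formulated in \cite[Lemma 3.8]{LPTSA}. The generalized argument principle states that if $\mathcal{A}(\omega)$ is such an operator-valued function on a neighborhood of $\overline{V}$, analytic and invertible on $\partial V$, and $f$ is holomorphic on a neighborhood of $\overline{V}$, then
\begin{align*}
\tfrac{1}{2\pi i}\,\mathrm{tr}\!\int_{\partial V} f(\omega)\,\mathcal{A}(\omega)^{-1}\partial_\omega\mathcal{A}(\omega)\,\intd\omega
= \sum_{\omega_c} m(\omega_c)\,f(\omega_c) - \sum_{\omega_p} n(\omega_p)\,f(\omega_p),
\end{align*}
where the first sum is over characteristic values $\omega_c$ in $V$ (counted with multiplicity $m(\omega_c)$) and the second sum is over poles $\omega_p$ of $\mathcal{A}^{-1}$ in $V$ (counted with multiplicity $n(\omega_p)$). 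I would apply this with $\mathcal{A} = \mathcal{A}_\epsilon$ and $f(\omega) = \omega - k_j^0$.

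The first step is to verify the hypotheses of the generalized argument principle for $\mathcal{A}_\epsilon$ on $V$. From Section~\ref{subsec:BdryOperatortoEigProb} we already know that $k\mapsto\mathcal{A}_\epsilon(k)$ is an analytic Fredholm operator-valued function of index $0$ on $\CC\setminus i\RR^-$, so it is in particular finitely meromorphic there. By the assumption that no square root of a Zaremba eigenvalue to the partition $(\GDir,\GNeu\cupdot\GDel)$ other than $k_j^\epsilon$ lies in $\overline{V}$, together with Statement~\eqref{statement:calAk=0}, we conclude that $\mathcal{A}_\epsilon(\omega)$ is invertible on $\overline{V}\setminus\{k_j^\epsilon\}$; in particular it is invertible on $\partial V$, as required.

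Next, I would identify the characteristic values and poles in $V$. There are no poles of $\mathcal{A}_\epsilon$ (it is everywhere analytic on $\CC\setminus i\RR^-$), and the only characteristic value of $\mathcal{A}_\epsilon$ in $V$ is $k_j^\epsilon$. Moreover, since $k_j^0$ is a simple characteristic value of $\mathcal{A}_0$ and $\mathcal{A}_\epsilon$ is a small analytic perturbation of $\mathcal{A}_0$ for small $\epsilon$, the neighborhood $V$ can be (and by hypothesis is) chosen small enough that $k_j^\epsilon$ inherits simplicity as a characteristic value of $\mathcal{A}_\epsilon$. Substituting $m(k_j^\epsilon)=1$ and $f(\omega)=\omega-k_j^0$ into the argument-principle formula then yields precisely
\begin{align*}
\tfrac{1}{2\pi i}\,\mathrm{tr}\!\int_{\partial V} (\omega-k_j^0)\,\mathcal{A}_\epsilon(\omega)^{-1}\partial_\omega\mathcal{A}_\epsilon(\omega)\,\intd\omega = k_j^\epsilon - k_j^0,
\end{align*}
which is the claim.

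The main obstacle is not the computation itself but the justification that the generalized argument principle applies in the precise functional-analytic setting of $\mathcal{A}_\epsilon$: one must check that the Fredholm/meromorphic framework of \cite{LPTSA} covers the mixed Sobolev spaces $H^{-1/2}(\GDir)\times H^{-1/2}(\GNeu) \to H^{1/2}(\GDir)\times H^{-1/2}(\GNeu)$ used here, and that the simplicity of $k_j^\epsilon$ as a characteristic value genuinely follows from the simplicity of $k_j^0$ for $V$ sufficiently small (this is a standard continuity argument for characteristic values of analytic Fredholm families, but it requires choosing $V$ small enough in a way compatible with $\epsilon$).
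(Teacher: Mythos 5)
Your proof is correct and takes essentially the same route as the paper: the paper offers no proof of this lemma and simply quotes it as \cite[Lemma 3.8]{LPTSA}, i.e.\ the generalized argument principle applied to $\mathcal{A}_\epsilon$ with $f(\omega)=\omega-k_j^0$, which is precisely what you reconstruct, including the needed observation that $k_j^\epsilon$ must be a simple characteristic value of $\mathcal{A}_\epsilon$. The one caveat — shared by the lemma's own hypotheses and acknowledged by the paper in the remark following Statement~\eqref{statement:calAk=0} — is that excluding other square roots of Zaremba eigenvalues from $\overline{V}$ does not by itself exclude non-real characteristic values of $\mathcal{A}_\epsilon$ from $\overline{V}$, so the invertibility of $\mathcal{A}_\epsilon$ on $\partial V$ that your argument (and the lemma) relies on requires $V$ to be chosen to avoid those as well.
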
Here $\del_\omega$ denotes the variation of the operator in the wavenumber parameter $\omega$. This expression is exact. Unfortunately, its use in a practical algorithm is limited, since it would entail inverting the operator $\calAe(\omega)$ for each $\epsilon$ used in an optimization. It is useful, therefore, to locate an expression in which this inverse is approximated by $\calAz(\omega)$ instead.

From \cite[Theorem 3.12]{LPTSA} we get the approximation
\begin{align}\label{equ:calAk bad approx}
	\kjeps-\kjzer 
		\approx \frac{-1}{2\pi i}\,\mathrm{tr}
			\int_{\del V} \calAz(\omega)^{-1} (\calAe(\omega)-\calAz(\omega))\,\intd \omega\,,
\end{align}
where we expect the error to be in $o\left( \frac{1}{|\log(\eps)|} \right)$. 
%\todo{Write something that we expect an error of $\OO(\eps/|\log(\eps)|)$ after we have the last chapter completed.} 
We can, in fact, obtain a faster and even more accurate approximation, which we describe in the following proposition.

%\begin{proposition}\label{prop:keps-ko good approx}
%	Let $\kjzer$ be a simple characteristic value and $\eps_0>0$ small enough. There exists a set $V_{\eps_0}\subset\CC$, a simply connected open domain around $\kjzer$, such that for all $\eps<\eps_0$, $\kjeps\in V_{\eps_0}$ and no other square root of the Laplace eigenvalue s to the partition $(\GDir, \GNeu\cupdot\GDel)$ of $\del\Om$ is element of $\overline{V_{\eps_0}}$. And $\kjeps$ is given by 
%	\begin{align*}
%		\kjeps-\kjzer 
%			= \frac{-1}{2\pi i}\,\mathrm{tr}
%				\int_{\del V_{\eps_0}} (\mathrm{I}+(\omega-\kjzer)\calAe(\kjzer)^{-1}\del_\omega\calAe(\kjzer))^{-1} \,\intd \omega + \OO(\eps_0^2)\,
%	\end{align*}
%	where $\mathrm{I}$ is the identity operator.
%\end{proposition} 
%We remark here that in the numerical implementation, $V_{\eps_0}$ cannot be chosen arbitrarily small in diameter. In fact, if $V_{\eps_0}$ is too small, and $\eps$ not small enough, we do not know whether the proposition holds.

\begin{proposition}\label{prop:keps-ko good approx}
	Let $\kjzer$ be the $jth$ (sorted) characteristic value of $\calAkz$ corresponding to the decomposition $\Gamma_{\textrm{D}}, \Gamma_{\textrm{N}}, $ and assume it is simple  Then one can find a $\eps>0$ and a neighbourhood $V\subset \CC$ containing $\kjzer$ so that
	\begin{itemize}
		\item the $jth$ characteristic value $\kjeps$ of the  operator $\calAke$ (obtained by changing $\Gamma_\Delta$ to a Neumann boundary condition) is contained $\in V$;
		\item  no other square root of the Laplace eigenvalues to the partition $(\GDir, \GNeu\cupdot\GDel)$ of $\del\Om$ are in  $\overline{V}$.
		\item The characteristic value of the perturbed operator $\kjeps$ is given by 
	\begin{align*}
		\kjeps-\kjzer 
			= \frac{-1}{2\pi i}\,\mathrm{tr}
				\int_{\del V} (\mathrm{I}+(\omega-\kjzer)\calAe(\kjzer)^{-1}\del_\omega\calAe(\kjzer))^{-1} \,\intd \omega \\
				\times \bigg[ 1 + \OO(
				\mathrm{tr}
				\int_{\del V} (\mathrm{I}+(\omega-\kjzer)\calAe(\kjzer)^{-1}\del_\omega\calAe(\kjzer))^{-1} \,\intd \omega )\bigg].
	\end{align*}\end{itemize}
	Here $\mathrm{I}$ is the identity operator.
\end{proposition}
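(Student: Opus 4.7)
The plan is to start from the exact contour-integral representation in Lemma \ref{lemma:GenArgPri Exact} and to reduce it to the stated form by Taylor-expanding $\calAe(\omega)$ about $\omega = \kjzer$ followed by a short algebraic identity. First I would fix the neighbourhood $V$: since $\kjzer$ is a simple characteristic value of $\calAkz$, one may choose an open disk $V \ni \kjzer$ so small that no other square root of a Zaremba eigenvalue for the original partition lies in $\overline V$. A Gohberg--Sigal counting argument then guarantees that for all sufficiently small $\eps$, exactly one characteristic value $\kjeps$ of $\calAe$ lies in $V$, with no characteristic value on $\del V$, which validates the hypotheses of Lemma \ref{lemma:GenArgPri Exact} applied to $\calAe$.

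Next, set $B := \calAe(\kjzer)^{-1}\del_\omega\calAe(\kjzer)$, which is well-defined for small $\eps$ since $\calAe(\kjzer)$ is invertible ($\kjzer \neq \kjeps$, and $\kjeps$ is the only characteristic value inside $V$). Writing the analytic Taylor expansion
\begin{align*}
\calAe(\omega) = \calAe(\kjzer)\bigl[\mathrm{I} + (\omega-\kjzer)\, B + (\omega-\kjzer)^2\, \mathcal{R}_\eps(\omega)\bigr]
\end{align*}
with $\mathcal{R}_\eps$ analytic and uniformly bounded on $\overline V$, a Neumann-series inversion together with $\del_\omega\calAe(\omega) = \del_\omega\calAe(\kjzer) + \OO(\omega-\kjzer)$ yields
\begin{align*}
(\omega-\kjzer)\calAe(\omega)^{-1}\del_\omega\calAe(\omega) = (\omega-\kjzer)[\mathrm{I} + (\omega-\kjzer)B]^{-1} B + (\omega-\kjzer)^2\, \mathcal{E}_\eps(\omega),
\end{align*}
with $\mathcal{E}_\eps$ bounded on $\overline V$. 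The key algebraic identity
\begin{align*}
(\omega-\kjzer)[\mathrm{I} + (\omega-\kjzer)B]^{-1} B = \mathrm{I} - [\mathrm{I} + (\omega-\kjzer)B]^{-1},
\end{align*}
which follows at once from $(\mathrm{I} + uB)^{-1}(uB) = \mathrm{I} - (\mathrm{I} + uB)^{-1}$, combined with $\oint_{\del V} \mathrm{I}\,\intd\omega = 0$, then converts the exact formula of Lemma \ref{lemma:GenArgPri Exact} into the main term claimed in the proposition.

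The hard part will be the quantitative error statement. Inside $V$, the integrand $[\mathrm{I} + (\omega-\kjzer) B]^{-1}$ has a single pole at $\omega = \kjzer - 1/\mu_\eps$, where $\mu_\eps$ is the unique eigenvalue of $B$ tending to zero as $\eps \to 0$; a Rouch\'e-type argument identifies this pole with $\kjeps$ at leading order, so the magnitude of the main integral is comparable to $|\kjeps - \kjzer|$. The remainder coming from $\mathcal{E}_\eps$ is of order $|\omega - \kjzer|^2$ pointwise and contributes at the next order after integration, so that dividing by the main term produces precisely the relative error $\OO(\text{main term})$ asserted in the proposition. Making this bound uniform in $\eps$ reduces to trace-class estimates on $\mathcal{E}_\eps$ that follow from the compactness and Fredholm index-zero properties of the boundary integral operators recalled in Section \ref{subsec:BdryOperatortoEigProb}; controlling the lower bound on the denominator $|\mu_\eps|$ in terms of $\eps$ is the most delicate point, and would be handled by combining the junction-regularity expansions of Proposition \ref{prop:asympt for u and phi} with the explicit $\eps$-dependence of $\calAe - \calAz$.
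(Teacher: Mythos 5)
Your derivation of the main term is essentially the paper's: both arguments Taylor-expand $\calAe(\omega)$ about $\kjzer$, and your identity $(\mathrm{I}+uB)^{-1}(uB)=\mathrm{I}-(\mathrm{I}+uB)^{-1}$ is exactly the manipulation the paper performs (in un-factored form) to turn the generalized-argument-principle integrand into $(\mathrm{I}+(\omega-\kjzer)\calAe(\kjzer)^{-1}\del_\omega\calAe(\kjzer))^{-1}$ plus a term whose contour integral vanishes. Where you genuinely diverge is in the bookkeeping of the error. The paper applies the exact formula of Lemma \ref{lemma:GenArgPri Exact} not to $\calAe(\omega)$ itself but to the linearized pencil $\calAe(\kjzer)+(\omega-\kjzer)\del_\omega\calAe(\kjzer)$, after invoking the generalized Rouch\'e theorem to show this pencil has a simple characteristic value $k_j^\sharp$ in $V$; the main contour integral then equals $k_j^\sharp-\kjzer$ \emph{exactly}, and the relative error is obtained separately from the Gohberg--Sigal perturbation formula $\kjeps-k_j^\sharp=-(\calBe(k_j^\sharp)\psi_j^\sharp,\psi_j^\sharp)=\OO((k_j^\sharp-\kjzer)^2)$. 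You instead keep the exact formula for $\calAe$ and push the quadratic Taylor remainder through the integrand. That can be made to work, but your claim that $\mathcal{E}_\eps$ is ``bounded on $\overline V$'' and ``contributes at the next order after integration'' needs one more step: $\mathcal{E}_\eps$ contains the resolvents $\calAe(\omega)^{-1}$ and $[\mathrm{I}+(\omega-\kjzer)B]^{-1}$, which have poles inside $V$ (at $\kjeps$ and near it), and $|\omega-\kjzer|^2$ is not small on a fixed contour $\del V$; the second-order smallness of the remainder integral must come from evaluating it by residues, each residue carrying a factor $(\mathrm{pole}-\kjzer)^2=\OO((\kjeps-\kjzer)^2)$. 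The paper's detour through $k_j^\sharp$ is precisely a way of packaging that computation; either route lands on the same statement, and both leave the uniformity in $\eps$ (your lower bound on $|\mu_\eps|$, the paper's ``standard perturbation argument'') at a comparable level of informality.
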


\begin{proof}
%	\ki{
%The proof of this proposition has not been established yet. But we have 2 ideas. The reason we think it is true is due to numerical tests. The ideas are written below. Idea 1 lacks the convergence order. Idea 2 seems intuitively to be wrong.
%
%	\textbf{IDEA 1:}
	 We first observe from Proposition \ref{prop:lambda<lambda'} together with the fact that $\calAe(k)$ is a Fredholm analytic operator of index $0$ in $\CC \setminus i \RR^-$, we can see that $\kjeps\nearrow\kjzer$ for $\eps\searrow 0^+$. We now examine the {\it perturbed} operator $\calAe$. Its characteristic value is $\kjeps$. Provided $\kjzer$ is sufficiently close to $\kjeps$, we have the following Taylor expansion: 
	\begin{align}\label{equ:Taylor calAke}
		\calAe(\omega)=\calAe(\kjzer)+(\omega-\kjzer)\del_\omega \calAe(\kjzer)+\calBe(\omega)\,,
	\end{align}
	where $\calBe(\omega)=\OO((\omega-\kjzer)^2)$. This expansion holds only in a neighborhood $V^0_\eps$ of $\kjzer$, and so  $\eps$ must be small enough such that $\kjzer\in V^0_\eps$. 
	
	Then consider that we have in the operator-norm 
	\begin{align*}
		\NORM{\left(\calAe(\kjzer)+(\omega-\kjzer)\del_\omega \calAe(\kjzer)\right)^{-1}\calBe(\omega)}<1\,,
	\end{align*}
	for $\omega\in V_\eps\subset V^0_\eps$ close enough to both $\kjeps$ and $\kjzer$, because then the Taylor remainder $\calBe(\omega)=\OO((\omega-\kjzer)^2)$. If $\eps$ is small enough, then $\kjeps\in V_\eps$. Then by the Generalization of Rouché’s Theorem \cite[Theorem 1.15]{LPTSA} we have that since $\calAe(\kjzer)+(\omega-\kjzer)\del_\omega \calAe(\kjzer)$ and $\calAe(\omega)$ are close in operator norm, they both have the same number of characteristic values in $V_\eps$. Thus $\calAe(\kjzer)+(\omega-\kjzer)\del_\omega \calAe(\kjzer)$ has a simple characteristic value $k^{\sharp}_j$ in $V_\eps$.
	Now we can use Lemma \ref{lemma:GenArgPri Exact}, but replacing $\calAz(\omega)$ by $\left(\calAe(\kjzer)+(\omega-\kjzer)\del_\omega \calAe(\kjzer)\right)$:
	
	to get
	\begin{align*}
		k^{\sharp}_j-\kjzer
			&= \frac{1}{2\pi i}\,\mathrm{tr}
				\int_{\del V_\eps} 
					(\omega-\kjzer)
					\left(\calAe(\kjzer)+(\omega-\kjzer)\del_\omega \calAe(\kjzer)\right)^{-1} 
					\\ &\times \del_\omega \left(\calAe(\kjzer)+(\omega-\kjzer)\del_\omega \calAe(\kjzer)\right)\,
				\intd \omega\\
			&= \frac{1}{2\pi i}\,\mathrm{tr}
				\int_{\del V_\eps} 
					(\omega-\kjzer)
					\left(\calAe(\kjzer)+(\omega-\kjzer)\del_\omega \calAe(\kjzer)\right)^{-1} 
					\del_\omega \calAe(\kjzer)
				\,\intd \omega,
		\end{align*} 
		and hence, 
		\begin{align*}
		k^{\sharp}_j-\kjzer		
			&= \frac{1}{2\pi i}\,\mathrm{tr}
				\int_{\del V_\eps} 
					\left(\calAe(\kjzer)+(\omega-\kjzer)\del_\omega \calAe(\kjzer)\right)^{-1} 
					\\ & \times \left(\calAe(\kjzer)+(\omega-\kjzer)\del_\omega \calAe(\kjzer) -\calAe(\kjzer)\right)
				\,\intd \omega\\
			&= \frac{1}{2\pi i}\,\mathrm{tr} \left(
				\int_{\del V_\eps} 
					\mathrm{I}
				\,\intd \omega
				\!-\!
				\int_{\del V_\eps} 
					\!\left(\calAe(\kjzer)\!+\!(\omega\!-\!\kjzer)\del_\omega \calAe(\kjzer)\right)^{-1} 
					\!\!\calAe(\kjzer)
				\,\intd \omega\right)\\
			&= -\frac{1}{2\pi i}\,\mathrm{tr} 
				\int_{\del V_\eps} 
					\left(\calAe(\kjzer)+(\omega-\kjzer)\del_\omega \calAe(\kjzer)\right)^{-1} 
					\calAe(\kjzer)
				\,\intd \omega\,.
	\end{align*}
Moreover, by a standard perturbation argument \cite[Section 5.2.4]{LPTSA}, we have at the leading-order term
$$
\kjeps-	k^{\sharp}_j = - {(\calBe(k^{\sharp}_j) \psi^{\sharp}_j, \psi^{\sharp}_j)}, 
	$$
where $\psi^{\sharp}_j$ is the root function associated with the characteristic value	$k^{\sharp}_j$ evaluated at $k^{\sharp}_j$. Thus, 
\begin{align*}
		\kjeps-\kjzer 
			= (k^{\sharp}_j - \kjzer) (1 + \OO( k^{\sharp}_j - \kjzer))\,,
	\end{align*}
and therefore, Proposition \ref{prop:keps-ko good approx} holds. 
\end{proof}
We remark on the significance of this result from the point of view of computation, and which makes it a key ingredient in our algorithm. If one seeks a high-accuracy approximation of the characteristic value $\kjeps$ of $\calAke$, and one already has a good approximation of $\kjzer$, the approximation in Proposition 2.5 allows us to proceed by assembling only one matrix, that corresponding to $\mathcal{A}_\epsilon(k_j^0)$. The contour integrals can be effectively computed using the trapezoidal rule, making this an inexpensive but very accurate approximation of $\kjeps$.

\subsection{Approximation of the Zaremba Function}\label{subsection:DiffApproxZarembaFct}
%\textcolor{red}{I found the description below somewhat hard to follow without a figure. Is the idea that there is pure Dirichlet, with only one Neumann bit being inserted? Then maybe use a different notation for the Zaremba functions, such as I have tried in Section 2?}
Let $\Om\DEF\{ z\in\CC \mid |z|<1 \}$, and let $\GDel\subset \del\Om$ be a boundary interval of length $2\eps$ with center $y_\star\in\GDel$. Let $(\del\Om, \varnothing)$ be the partition of $\del\Om$, and with it we associate the Zaremba function $\mathrm{Z}_{\mathrm{D}}^k(\xS,\cdot)$, for $\xS\in\Om$, defined via  (\ref{pde:ZarembaFunda}). This corresponds to $\GDel$ having a Dirichlet boundary condition. Then we define $\mathrm{Z}_{\mathrm{N}}^k(\xS,\cdot)\in\mathrm{L}^2(\Om)$, $\xS\in\Om$, also defined via (\ref{pde:ZarembaFunda}), to be the Zaremba function associated to the partition $(\del\Om\setminus\overline{\GDel}, \GDel)$. This in turn corresponds to $\GDel$ having a Neumann boundary condition. We then have the following lemma.

\begin{lemma}\label{lemma:ZN-ZD}
	Let $\Om, y_\star, \GDel, \mathrm{Z}_{\mathrm{D}}^k(\xS,\cdot)$ and $\mathrm{Z}_{\mathrm{N}}^k(\xS,\cdot)$ be defined as described above. Let $\eps>0$ be small enough. Let $k>0$, such that $k^2\neq \lambda_j^{\del\Om\setminus\GDel}$, and  $k^2\neq \lambda_j^{\del\Om}$, for all $j\in\NN$. Then for all $z\in\Om$,
	\begin{align*}
		\mathrm{Z}_{\mathrm{N}}^k(\xS,z)
			= \mathrm{Z}_{\mathrm{D}}^k(\xS,z)
			-\eps^2\frac{\pi}{2} \del_{\nu_{y_\star}} \mathrm{Z}_{\mathrm{D}}^k(z,y_\star)\del_{\nu_{y_\star}} \mathrm{Z}_{\mathrm{D}}^k(\xS,y_\star)
			+\OO\left(\frac{\eps^2}{|\log(\eps/2)|^2}\right)\,.
	\end{align*}
\end{lemma}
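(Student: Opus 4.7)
The plan is to combine Green's second identity with a matched-asymptotic analysis on the small arc $\GDel$. Set $w \DEF \mathrm{Z}_{\mathrm{N}}^k(\xS,\cdot) - \mathrm{Z}_{\mathrm{D}}^k(\xS,\cdot)$. Since the Dirac sources cancel, $(\Laplace + k^2)w = 0$ in $\Om$; since $\mathrm{Z}_{\mathrm{D}}^k$ vanishes on all of $\del\Om$ while $\mathrm{Z}_{\mathrm{N}}^k$ vanishes on $\del\Om \setminus \overline{\GDel}$, we have $w \equiv 0$ on $\del\Om \setminus \overline{\GDel}$; and since $\del_\nu \mathrm{Z}_{\mathrm{N}}^k = 0$ on $\GDel$, we have $\del_{\nu_y}w(y) = -\del_{\nu_y}\mathrm{Z}_{\mathrm{D}}^k(\xS,y)$ for $y \in \GDel$.

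First, I would apply Green's second identity to the pair $(w,\,\mathrm{Z}_{\mathrm{D}}^k(z,\cdot))$. Because $\mathrm{Z}_{\mathrm{D}}^k(z,\cdot)$ vanishes on the entire boundary, the $v\,\del_\nu u$ term on $\del\Om$ disappears; and the $u\,\del_\nu v$ term is supported only on $\GDel$ (since $w = 0$ on $\del\Om \setminus \overline{\GDel}$). This yields the exact representation
\[
  w(z) = \int_{\GDel} w(y)\,\del_{\nu_y}\mathrm{Z}_{\mathrm{D}}^k(z,y)\,\intd\sigma_y\,.
\]
The problem is thereby reduced to computing $w|_{\GDel}$ and replacing $\del_{\nu_y}\mathrm{Z}_{\mathrm{D}}^k(z,y)$ by its value at $y_\star$, which is valid up to $\OO(\eps)$ on $\GDel$ by smoothness of $\mathrm{Z}_{\mathrm{D}}^k(z,\cdot)$ near a point of the boundary different from $z$.

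For $w|_{\GDel}$, I would perform the local rescaling $y = y_\star + \eps\,\xi\,\tau$, where $\tau$ is the unit tangent to $\del\Om$ at $y_\star$ and $\xi \in (-1,1)$. To leading order both the $k^2$-term and the boundary curvature become subleading, and the Neumann datum freezes to the constant $-g_0$, with $g_0 \DEF \del_{\nu_{y_\star}}\mathrm{Z}_{\mathrm{D}}^k(\xS,y_\star)$. The inner problem reduces to the canonical upper-half-plane Zaremba problem
\[
  \Laplace V = 0\text{ in the UHP}, \qquad V = 0\text{ on }\{|\xi|>1\}, \qquad \del_\nu V = -g_0\text{ on }\{|\xi|<1\}.
\]
Its Dirichlet trace is $V(\xi,0) = -g_0\sqrt{1-\xi^2}$ for $|\xi|<1$; this follows from the classical finite Hilbert transform identity $\frac{1}{\pi}\,\mathrm{p.v.}\!\int_{-1}^1 \sqrt{1-x'^2}/(x-x')\,\intd x' = -x$ for $|x|<1$ combined with the Cauchy--Riemann relation relating the trace and normal-trace of a harmonic function via its conjugate. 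Returning to the original scale, $w(y)|_{\GDel} = -g_0\sqrt{\eps^2 - s^2}$ up to correction, where $s$ is arc length from $y_\star$ along $\GDel$. Inserting this into the representation formula and evaluating $\int_{-\eps}^{\eps}\sqrt{\eps^2 - s^2}\,\intd s = \pi\eps^2/2$ produces the asserted leading-order coefficient.

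The main obstacle will be sharpening the error to $\OO(\eps^2/|\log(\eps/2)|^2)$ rather than the softer $\oo(\eps^2/|\log(\eps/2)|)$ that a single application of matched asymptotics would give. I would do this by iterating the boundary integral equation for $w|_{\GDel}$ a second time---in the same spirit as the passage from \eqref{equ:calAk bad approx} to Proposition~\ref{prop:keps-ko good approx}---exploiting the fact that the 2D logarithmic single-layer kernel $-\frac{1}{2\pi}\log|\cdot|$ restricted to an arc of length $2\eps$ has inverse of operator norm $\OO(1/|\log\eps|)$ in the appropriate Sobolev space, so that each iteration shrinks the remainder by one factor of $1/|\log\eps|$. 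The root-type junction singularities at the endpoints of $\GDel$, whose precise structure is supplied by Proposition~\ref{prop:asympt for u and phi}, must then be tracked carefully to verify that they contribute only at the claimed remainder order and do not spoil either the replacement of $\del_{\nu_y}\mathrm{Z}_{\mathrm{D}}^k(z,y)$ by its value at $y_\star$ or the integration against the $\sqrt{\eps^2 - s^2}$ profile.
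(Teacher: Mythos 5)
The paper offers no proof of this lemma beyond the citation of \cite[Theorem 5.4]{HRMetasurfaceOnArxiv} and \cite[Equation (6.24)]{HRMetasurfaceOnArxiv}, so your proposal has to be judged as a from-scratch reconstruction of that external argument. Your leading-order derivation is correct and is essentially the layer-potential/matched-asymptotics mechanism behind the cited results: the Green's-identity representation $w(z)=\int_{\GDel}w(y)\,\del_{\nu_y}\mathrm{Z}_{\mathrm{D}}^k(z,y)\,\intd\sigma_y$ is exact (both boundary terms you discard do vanish), the canonical half-plane Zaremba problem indeed has Dirichlet trace $-g_0\sqrt{1-\xi^2}$ (e.g.\ as the real part of $-i g_0(\zeta-\sqrt{\zeta^2-1})$), the far field of that inner solution is a pure dipole which matches the Poisson-kernel singularity of $\del_{\nu_{y_\star}}\mathrm{Z}_{\mathrm{D}}^k(z,y_\star)$ with no logarithmic amplitude, and $\int_{-\eps}^{\eps}\sqrt{\eps^2-s^2}\,\intd s=\pi\eps^2/2$ reproduces the stated coefficient with the correct sign.

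The genuine gap is the remainder order. A single pass of your matched asymptotics only gives $w|_{\GDel}=-g_0\sqrt{\eps^2-s^2}\,\bigl(1+\OO(1/|\log(\eps/2)|)\bigr)$, hence an error $\OO(\eps^2/|\log(\eps/2)|)$, whereas the lemma asserts $\OO(\eps^2/|\log(\eps/2)|^2)$ --- that is, the coefficient $\pi/2$ receives \emph{no} correction at order $1/|\log(\eps/2)|$. Your proposed remedy (iterate the boundary integral equation, gaining a factor $1/|\log\eps|$ per iteration) explains how to compute further terms, but not why the first logarithmic correction to the leading coefficient vanishes; the iteration could just as well produce a nonzero term $c_1\eps^2/|\log(\eps/2)|$, which would contradict the statement as written. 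To close this you must actually carry out the first correction step: invert the logarithmic kernel on $\GDel$ (whose action on constants is what produces the $\log(\eps/2)$, the $\eps/2$ being the logarithmic capacity of an arc of length $2\eps$) and check that the induced monopole response contributes to $w(z)$ only at the next order in $1/|\log(\eps/2)|$. That cancellation is precisely what \cite[Equation (6.24)]{HRMetasurfaceOnArxiv} supplies, and it is the one step your sketch asserts rather than proves.
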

Lemma \ref{lemma:ZN-ZD} follows readily from combining the results in \cite[Theorem 5.4]{HRMetasurfaceOnArxiv} and \cite[Equation (6.24)]{HRMetasurfaceOnArxiv} 
%\ki{These are references to the Paper on ARXIV!}

Numerical experiments confirm that $|\ZxSN(y)-\ZxSD(y)|$ is of order of $\epsilon^2$, as long as $y$ is far enough away from the boundary.

\section{Spectral Decomposition of the Zaremba Function}\label{Ch:DecompZarembaFct}

\newcommand{\HrmG}{\mathrm{H}_{0, \GDir}^1}
\newcommand{\rmr}{\mathrm{r}}
Let us again consider the more general setup at the beginning of Section \ref{Ch:Prelim}, that is let $(\GDir,\GNeu)$ be a partition of $\del\Om$, let $\{ \lgdir_j \}_{j=1}^{\infty}$ be the Zaremba eigenvalues and let $\{ u_j \}_{j=1}^\infty$ be an $L^2$-orthonormal basis of associated eigenfunctions. Then we have the following statement about the Zaremba function $\ZxS$, $\xS\in\Om$, defined by (\ref{pde:ZarembaFunda}).

\begin{theorem}\label{thm:ZarembaDecomp}
	For all $y\in\Om$, $y\neq x_S$ and for all $k>0$ which are not in the spectrum, ie,  $k^2\neq \lgdir_j$ of the Zaremba eigenvalue problem, the Zaremba function $\ZxS$, given by  (\ref{pde:ZarembaFunda}), exists and is in $ \Leu^2_{\text{loc}}(\Om)$. Furthermore, we can write it as
	\begin{align*}
		\ZxS(y)
			= \sum_{j=1}^\infty \frac{u_j(x_S)\,u_j(y)}{k^2-\lgdir_j} \,.
	\end{align*}
\end{theorem}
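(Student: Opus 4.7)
The plan is to expand $\ZxS$ in the $L^2$-orthonormal basis $\{u_j\}_{j=1}^\infty$ of Zaremba eigenfunctions and identify the Fourier coefficients by a Green's identity argument. The decomposition of $\del\Om$ into $(\GDir,\GNeu)$ is chosen, as in the statement of the theorem, so that $k^2$ is not in the Zaremba spectrum.

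First I would establish existence and the required regularity of $\ZxS$. Writing $\ZxS(y)=\Gk(\xS,y)+\rmr^k(\xS,y)$ as in the paragraph following (\ref{pde:ZarembaFunda}), the remainder $\rmr^k(\xS,\cdot)$ is the unique solution of the Fredholm problem (\ref{pde:ZarRemainder}); uniqueness holds because $k^2\neq\lgdir_j$ for all $j$, so the only homogeneous solution with the mixed boundary conditions is zero. Since $\Gk(\xS,\cdot)$ has only a logarithmic singularity at $\xS$ in two dimensions while $\rmr^k(\xS,\cdot)$ is smooth in $\Om$, we obtain $\ZxS\in L^2_{\text{loc}}(\Om)$.

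Second, I would compute the Fourier coefficient $c_m\DEF\la \ZxS,u_m\ra_{L^2(\Om)}$. Fix $\rho>0$ small enough that $\overline{B_\rho(\xS)}\subset\Om$, and apply Green's second identity to $\ZxS$ and $u_m$ on $\Om\setminus B_\rho(\xS)$:
\begin{align*}
\int_{\Om\setminus B_\rho(\xS)}\!\!\bigl(u_m\Laplace \ZxS-\ZxS\Laplace u_m\bigr)\,\intd y
= \!\int_{\del\Om}\!\!\bigl(u_m\del_{\nu}\ZxS-\ZxS\del_{\nu}u_m\bigr)\,\intd\sigma
- \!\int_{\del B_\rho(\xS)}\!\!\bigl(u_m\del_{\nu}\ZxS-\ZxS\del_{\nu}u_m\bigr)\,\intd\sigma.
\end{align*}
The boundary integral on $\del\Om$ vanishes identically: on $\GDir$ both $u_m$ and $\ZxS$ vanish, while on $\GNeu$ both normal derivatives vanish. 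The integral on $\del B_\rho(\xS)$ is treated by the standard Green's-function argument: since $\ZxS-\Gk(\xS,\cdot)$ is smooth near $\xS$, the behavior of $\ZxS$ near $\xS$ is governed by $\frac{i}{4}H^{(1)}_0(k|\cdot-\xS|)$, and as $\rho\to 0^+$ the boundary term converges to $-u_m(\xS)$ by continuity of $u_m$ (which is smooth at the interior point $\xS$ by interior elliptic regularity). Using $\Laplace \ZxS=-k^2\ZxS+\delta_{\xS}$ and $\Laplace u_m=-\lgdir_m u_m$ on the left-hand side and passing to the limit $\rho\to 0^+$ yields
\begin{align*}
(\lgdir_m-k^2)\,c_m = -u_m(\xS),\qquad \text{i.e.}\qquad c_m=\frac{u_m(\xS)}{k^2-\lgdir_m}.
\end{align*}

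Finally, since $\{u_j\}_{j=1}^\infty$ is an $L^2(\Om)$-orthonormal basis and $\ZxS\in L^2(\Om)$, Parseval gives the $L^2$-convergent expansion $\ZxS(y)=\sum_{j=1}^\infty c_j u_j(y)$, which is exactly the claimed formula. Away from $\xS$, where $\ZxS$ is smooth (it solves the homogeneous Helmholtz equation), the identification holds pointwise by interior elliptic regularity applied to the truncated sums. The main technical delicacy is the limit $\rho\to 0^+$ on $\del B_\rho(\xS)$: one must use the explicit singularity structure $\ZxS=\Gk(\xS,\cdot)+\rmr^k(\xS,\cdot)$ and the smoothness of $u_m$ at $\xS$ to identify the limit with $-u_m(\xS)$, while verifying that the contribution of the smooth remainder $\rmr^k(\xS,\cdot)$ to the boundary integral vanishes with $\rho$.
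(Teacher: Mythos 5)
Your proposal is correct and follows essentially the same route as the paper: existence via the decomposition $\ZxS=\Gk(\xS,\cdot)+\mathrm{R}^k(\xS,\cdot)$ with $\mathrm{R}^k$ solving (\ref{pde:ZarRemainder}), expansion in the eigenbasis, and identification of the coefficients by Green's identity. The only differences are cosmetic: the paper justifies the eigenfunction expansion through its Lemma \ref{lemma:f-sum c u goes to 0} rather than quoting the orthonormal-basis property outright, and it performs the Green's identity computation formally against the delta function, whereas you make that step rigorous by excising $B_\rho(\xS)$ and passing to the limit $\rho\to 0^+$.
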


Next, we will consider the proof of Theorem \ref{thm:ZarembaDecomp}. To this end, we define 
\begin{align*}
	\HrmG(\Om) \DEF \{ v\in\mathrm{H}^1(\Om) \mid v\MID_\GDir = 0 \}\,.
\end{align*}
Consider that the solution to the Laplace eigenvalue-equation $u$ is element of $\HrmG(\Om)$.
$$
	\mathrm{dom}(-\Laplace)\DEF \{ w\in\HrmG(\Om)\mid \Laplace w \in \Leu^2(\Om), \del_\nu w\MID_\GNeu = 0\}\,.
$$
The operator $-\Laplace$ is selfadjoint in $\Leu^2(\Om)$, which we readily see using Green's identity, and it has thus a discrete spectrum. Moreover, $-\Laplace$ corresponds to the sesquilinear form $\langle v_1, v_2 \rangle \mapsto (\nabla v_1, \nabla v_2)_{\Leu^2(\Om)}$ with domain $\HrmG$, since $(-\Laplace w_1, w_2)_{\Leu^2(\Om)}= (\nabla w_1, \nabla w_2)_{\Leu^2(\Om)}$ for all $w_1, w_2\in \mathrm{dom}(-\Laplace)$, see \cite{BS87,Kato,S12} for more details on semi-bounded self-adjoint operators and corresponding quadratic forms. And the form $\langle\cdot\,,\cdot\rangle$ is closed, non-negative and symmetric.
This allows us to use the min-max principle. Thus we can write for all $j\in\NN$,
\begin{align}\label{equ:min-max}
	\lgdir_j = \min_{\substack{L\subset\HrmG(\Om)\\\mathrm{dim} \,L\,=\, j}}\max_{\;\;v\in L\setminus\{0\}}\frac{\NORM{\nabla v}_{\Leu^2(\Om)}^2}{\NORM{v}_{\Leu^2(\Om)}^2}\,.
\end{align}

This leads us to the following lemma.

\begin{lemma}\label{lemma:f-sum c u goes to 0}
	For all $f\in \mathrm{dom}(-\Laplace)$, we have that
	\begin{align}
		\NORM{f-\sum_{j=1}^N c_j\,u_j}_{\Leu^2(\Om)}^2
			= \int_\Om \Big|f - \sum_{j=1}^N c_j\,u_j \Big|^2\intd x \xrightarrow{N\rightarrow \infty}0\,,
	\end{align}
	where $c_j\DEF(f\,, u_j)_{\Leu^2(\Om)}$, that is the linear subset spanned by eigenfunctions of the Laplace eigenvalue-equation with mixed boundary conditions (\ref{pde:ZarembaHom}) is dense in $\mathrm{dom}(-\Laplace)$.
\end{lemma}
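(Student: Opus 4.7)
The plan is to identify the truncated Fourier series $S_N f \DEF \sum_{j=1}^N c_j u_j$ with an $L^2$-orthogonal projection of $f$ onto $\mathrm{span}(u_1,\ldots,u_N)$, and then exploit the min-max principle \eqref{equ:min-max} together with a Parseval-type bound to control the $L^2$-norm of the residual $r_N \DEF f - S_N f$ by $\|\nabla f\|_{L^2(\Om)}^2/\lgdir_{N+1}$, which vanishes since $\lgdir_{N+1}\to\infty$.

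First I would check that $r_N\in \mathrm{dom}(-\Laplace)$: each $u_j\in\mathrm{dom}(-\Laplace)$, so finite linear combinations do so as well, and $f\in\mathrm{dom}(-\Laplace)$ by assumption. In particular $r_N\in\HrmG(\Om)$. By construction $(r_N,u_j)_{L^2(\Om)}=0$ for $j=1,\dots,N$. The dual form of the min-max principle \eqref{equ:min-max}, applied to any $v\in\HrmG(\Om)$ that is $L^2$-orthogonal to $u_1,\dots,u_N$, gives
\begin{align*}
 \lgdir_{N+1}\,\NORM{v}_{L^2(\Om)}^2\;\le\;\NORM{\nabla v}_{L^2(\Om)}^2\,.
\end{align*}
Applied to $v=r_N$, this yields $\NORM{r_N}_{L^2(\Om)}^2\le \NORM{\nabla r_N}_{L^2(\Om)}^2/\lgdir_{N+1}$.

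Next I would show that $\NORM{\nabla r_N}_{L^2(\Om)}^2$ is uniformly bounded in $N$. Using Green's identity — whose boundary term vanishes because $r_N=0$ on $\GDir$ and $\del_\nu r_N=0$ on $\GNeu$ — together with the eigen-relations $-\Laplace u_j=\lgdir_j u_j$ and $(u_j,u_k)_{L^2}=\delta_{jk}$, a direct expansion gives
\begin{align*}
 \NORM{\nabla r_N}_{L^2(\Om)}^2
 \;=\;(-\Laplace r_N,\,r_N)_{L^2(\Om)}
 \;=\;\NORM{\nabla f}_{L^2(\Om)}^2 - \sum_{j=1}^N \lgdir_j\,|c_j|^2\,.
\end{align*}
Since $\lgdir_j\ge 0$, this is at most $\NORM{\nabla f}_{L^2(\Om)}^2$, which is finite because $f\in \mathrm{dom}(-\Laplace)\subset \HrmG(\Om)$. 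Combining with the previous bound,
\begin{align*}
 \NORM{r_N}_{L^2(\Om)}^2 \;\le\; \frac{\NORM{\nabla f}_{L^2(\Om)}^2}{\lgdir_{N+1}}\;\longrightarrow\;0,
\end{align*}
since $\lgdir_{N+1}\to\infty$ as recalled in Section \ref{subsection:LaplEV}.

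The density statement in $\mathrm{dom}(-\Laplace)$ (understood in the $L^2(\Om)$ topology) is an immediate consequence of the convergence just established, since $\mathrm{dom}(-\Laplace)$ is $L^2$-dense in $L^2(\Om)$ by selfadjointness. The only delicate point in the argument is the use of the dual min-max to bound $\NORM{r_N}_{L^2}$ by $\NORM{\nabla r_N}_{L^2}$: this requires that $r_N$ lie in the form domain $\HrmG(\Om)$ and be genuinely $L^2$-orthogonal to $u_1,\dots,u_N$, both of which follow from the construction and the regularity $f\in \mathrm{dom}(-\Laplace)$. Once this is in place, the remainder of the proof is the standard Parseval accounting outlined above.
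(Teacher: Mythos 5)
Your proof is correct and follows essentially the same route as the paper's: show $r_N=f-\sum_{j=1}^N c_j u_j$ is $L^2$- and form-orthogonal to $u_1,\dots,u_N$, bound $\NORM{r_N}_{\Leu^2(\Om)}^2$ by $\NORM{\nabla r_N}_{\Leu^2(\Om)}^2/\lgdir_{N+1}$, control $\NORM{\nabla r_N}_{\Leu^2(\Om)}^2$ by $\NORM{\nabla f}_{\Leu^2(\Om)}^2$ through the Bessel-type identity, and let $\lgdir_N\to\infty$. The only cosmetic difference is that you invoke the dual (max--min) form of the variational principle as known, whereas the paper derives the inequality $\lgdir_N\le\NORM{\nabla r_N}^2/\NORM{r_N}^2$ from the stated min--max formula (\ref{equ:min-max}) by testing on $\mathrm{span}\{u_1,\dots,u_{N-1},r_N\}$ together with a short induction; both rest on exactly the same orthogonality relations.
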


\begin{proof}%[Lemma \ref{lemma:f-sum c u goes to 0}]
	Let $\rmr_N\DEF f-\sum_{j=1}^N c_j\,u_j$. Then for all $i=1,\ldots,N$, we have that
	\begin{align*}
		(\rmr_N\,, u_i)_{\Leu^2(\Om)}
			&= \left( f-\sum_{j=1}^N c_j\,u_j\,, u_i \right)_{\Leu^2}
			= (f\,, u_i)_{\Leu^2}- c_i\,(u_i\,,u_j)_{\Leu^2}
			= 0\,,\\
		(\nabla\rmr_N\,, \nabla u_i)_{\Leu^2(\Om)}
			&= (\nabla f\,, \nabla u_i)_{\Leu^2}- \sum_{j=1}^N c_j\,(\nabla u_j\,,\nabla u_i)_{\Leu^2}\\
			&= {\lgdir_i}(f\,, u_i)_{\Leu^2} - {\lgdir_i} c_j\,(u_i\,, u_i)_{\Leu^2}
			=0\,,
	\end{align*}
	where we used Green's identity and the fact that $f, u_j\in \mathrm{dom}(-\Laplace)$. Next, we want to show that 
	\begin{align}\label{equ:lgdirN smaller than nablarN/rN}
		\lgdir_N\leq\frac{\NORM{\nabla \rmr_N}^2_{\Leu^2(\Om)}}{\NORM{\rmr_N}^2_{\Leu^2(\Om)}}.
	\end{align} 
	To this end, consider the min-max principle (\ref{equ:min-max}), it tells us that
	\begin{align*}
		\lgdir_j 
		&\leq \max_{v\in \mathrm{span}\{ u_1,\ldots,u_{N-1}, \rmr_N \}}
					\frac{\NORM{\nabla v}_{\Leu^2(\Om)}^2}{\NORM{v}_{\Leu^2(\Om)}^2}\,\\
		&=\max_{a_1,\ldots, a_N \in \RR}
					\frac{\NORM{\nabla(a_N\,\rmr_N + a_1\,v_1+\ldots +a_{n-1}\,v_{n-1})}^2}
					{\NORM{a_N\,\rmr_N + a_1\,v_1+\ldots a_{n-1}\,v_{n-1}}^2}\,\\
		&=\max_{a_1,\ldots, a_N \in \RR}
					\frac{a_N^2\NORM{\nabla\rmr_N}^2 + a_1^2\NORM{\nabla v_1}^2+\ldots + a_{n-1}^2\NORM{\nabla v_{n-1}}^2}
					{a_N^2\NORM{\rmr_N}^2 + a_1^2\NORM{v_1}^2+\ldots + a_{n-1}^2\NORM{v_{n-1}}^2}\,\\
		&=\max_{a_1,\ldots, a_N \in \RR}
					\frac{a_N^2\NORM{\nabla\rmr_N}^2 + \lgdir_1 a_1^2+\ldots + \lgdir_{n-1} a_{n-1}^2}
					{a_N^2\NORM{\rmr_N}^2 + a_1^2+\ldots + a_{n-1}^2}\,\\
		&\leq\max_{a_1,\ldots, a_N \in \RR}
					\frac{a_N^2\NORM{\nabla\rmr_N}^2 + \lgdir_{n-1}( a_1^2+\ldots + a_{n-1}^2)}
					{a_N^2\NORM{\rmr_N}^2 + a_1^2+\ldots + a_{n-1}^2}\,.
	\end{align*}
	Thus, we can infer $\lgdir_N\leq\frac{\NORM{\nabla \rmr_N}^2}{\NORM{\rmr_N}^2}$ from $\lgdir_{n-1}\leq\frac{\NORM{\nabla \rmr_N}^2}{\NORM{\rmr_N}^2}$, which in turn is given by an induction argument, whose induction basis follows trivially from the min-max principle (\ref{equ:min-max}).
	Using the definition of $c_j$, we have that
	\begin{align*}
		\NORM{\nabla \rmr_N}^2_{\Leu^2(\Om)} 
			&=		\NORM{\nabla f}^2_{\Leu^2(\Om)} 
					-2 \sum_{j=1}^N c_j\,\lgdir_j (f,u_j)_{\Leu^2(\Om)} 
					+ \sum_{j=1}^N c_j^2\,\lgdir_j \NORM{u_j}^2_{\Leu^2(\Om)} \\
			&=		\NORM{\nabla f}^2_{\Leu^2(\Om)} 
					-\sum_{j=1}^N \lgdir_j (f,u_j)^2_{\Leu^2(\Om)} \\
			&\leq	\NORM{\nabla f}^2_{\Leu^2(\Om)}\,.
	\end{align*}
	Thus, using (\ref{equ:lgdirN smaller than nablarN/rN}), we have that 
	\begin{align}
		\NORM{\rmr_N}^2_{\Leu^2}\leq\frac{\NORM{\nabla f}^2_{\Leu^2}}{\lgdir_N}\,.
	\end{align}
Since $\NORM{\nabla f}^2_{\Leu^2} = (f,-\Laplace f)_{\Leu^2}\leq \NORM{f}_{\Leu^2}\NORM{\Laplace f}_{\Leu^2}<\infty$, $\NORM{\nabla f}^2_{\Leu^2}$ is bounded. Using the fact that $\lgdir_N \xrightarrow{N\rightarrow\infty}\infty$, we have that $\NORM{\rmr_N}_{\Leu^2}^2\xrightarrow{N\rightarrow\infty}0$. This completes the proof of Lemma \ref{lemma:f-sum c u goes to 0}.
\end{proof}

\begin{proof}[Theorem \ref{thm:ZarembaDecomp}]
	To show the existence of the Zaremba function $\ZxS$, we write $\ZxS(y)$, for all $y\in\Om\,, y\neq \xS$ as
	\begin{align}
		\ZxS(y) = \Gk(\xS, y) + \mathrm{R}^k(\xS, y)\,,
	\end{align}	 
	where $\Gk$ is the fundamental solution to the Helmholtz equation, and $\mathrm{R}^k$ satisfies
	\begin{align}\label{pde:ZarembaRemainder}
	\left\{ 
		\begin{aligned}
		 	(\Laplace+k^2) \mathrm{R}^k(\xS\,, y)	&= 0 \quad &&\text{in} \; &&\Om\,, \\
		 	\mathrm{R}^k(\xS\,, y)      			&= - \Gk(\xS, y) \quad &&\text{on} \; &&\GDir \,,\\
		 	\del_{\nu_y}  \mathrm{R}^k(\xS\,, y)	&= - \del_{\nu_y}\Gk(\xS, y) \quad &&\text{on} \; &&\GNeu \,.
		\end{aligned}
	\right.
	\end{align}
	The solution to (\ref{pde:ZarembaRemainder}) does exist, for those values of $k$ specified in the theorem, and it is in $\mathrm{H}^1(\Om)$, see \cite[Theorem 4.10]{McLeanEllitpicSystems}. Using that $\Gk(\xS,\cdot)\in L^2(\Om)$, we have that $\ZxS(y)\in L^2(\Om)$. Thus from Lemma \ref{lemma:f-sum c u goes to 0} and the density of $\mathrm{dom}(-\Laplace)$ in $L^2(\Omega)$, we have that for all $y\in\Om$, $y\neq \xS$,
	\begin{align*}
		\ZxS(y)=\sum_{j=1}^\infty a_j u_j(y)\,,
	\end{align*}
	for some $a_j\in\RR$, depending on $\xS$. Let us give an expression for the $a_j$. Using Green's identity, we have that
	\begin{align*}
	u_i(\xS)
		&= \int_\Om (\Laplace+k^2) \ZxS(y)\,u_i(y)\intd y
			= \int_\Om \ZxS(y)\,(\Laplace+k^2) u_i(y)\intd y \\
		&= (k^2-\lgdir_i)\int_\Om \ZxS(y)\,u_i(y)\intd y
			= (k^2-\lgdir_i)\int_\Om \;\sum_{j=1}^\infty\; a_j u_j(y)\,u_i(y)\,\intd y\\
		&= (k^2-\lgdir_i)\sum_{j=1}^\infty a_j \;\delta_0(i-j)
			= (k^2-\lgdir_i)a_i\,,
	\end{align*}
	where we used Fubini's theorem to interchange summation and integration.
	With that we infer that for all $i\in\NN$,
	\begin{align*}
		a_i = \frac{u_i(\xS)}{k^2-\lgdir_i}\,,
	\end{align*}
	and this concludes the proof.
\end{proof}

\makeatletter
\newenvironment{breakablealgorithm}
  {% \begin{breakablealgorithm}
   \begin{center}
     \refstepcounter{algorithm}% New algorithm
     \hrule height.8pt depth0pt \kern2pt% \@fs@pre for \@fs@ruled
     \renewcommand{\caption}[2][\relax]{% Make a new \caption
       {\raggedright\textbf{\ALG@name~\thealgorithm} ##2\par}%
       \ifx\relax##1\relax % #1 is \relax
         \addcontentsline{loa}{algorithm}{\protect\numberline{\thealgorithm}##2}%
       \else % #1 is not \relax
         \addcontentsline{loa}{algorithm}{\protect\numberline{\thealgorithm}##1}%
       \fi
       \kern2pt\hrule\kern2pt
     }
  }{% \end{breakablealgorithm}
     \kern2pt\hrule\relax% \@fs@post for \@fs@ruled
   \end{center}
  }
\makeatother

\section{The Algorithm}\label{Ch:Algorithm}
We next present our main algorithm for wave enhancement.
We begin with a domain $\Omega$, the source point $\xS$ and the receiver point $y$, both in $\Omega$, and a predetermined target value ${k_\star}$ corresponding to a desired transmission frequency. 

First, we determine the next higher Dirichlet eigenvalue to $k_\star^2$, which is done using a discretized version of the operator $\calA(k)$ given in Section \ref{Ch:Prelim}. The discretization follows the procedure developed in \cite{Nigam2014}.

Second, we determine a location $y_\star$ on the boundary $\del\Om$, which yields a higher absolute value of $|\ZxS(\xS,y)|$, when we insert a small enough Neumann boundary at that location. Finding the location is established using Lemma \ref{lemma:ZN-ZD}, that is we find the local maxima or minima of 
$$\del_{\nu_{y_\star}} \!\mathrm{Z}_{\mathrm{D}}^{k_\star}(\xS,y_\star)\cdot\del_{\nu_{y_\star}} \!\mathrm{Z}_{\mathrm{D}}^{k_\star}(y,y_\star).$$ 
The computation of the Zaremba function is done by solving the problem \ref{pde:ZarRemainder} using the procedure described in \cite{EldarBruno2018}, also uses the operator $\calA(k)$.

Third, we successively increase the Neumann boundary until the characteristic value hits the target characteristic value. The computation of the new characteristic value after a small increase of the Neumann boundary is achieved using Proposition \ref{prop:keps-ko good approx}. It might be that we need to increase the boundary initially by a large amount, and the resulting characteristic value has to be computed with the time-expensive procedure described in \cite{Nigam2014}.
%
%\textcolor{red}{I think the algorithm needs more explanation. For example, are you truly computing $\kjzer$ using the operator $\calAkz$? In this case, the matrices you obtain are not symmetric, am I correct? How are you computing the characteristic values? Also, you can only use the asymptotic formula for $\kjeps$ if the $\epsilon$ is small. Do you have to recompute characteristic values? And how do you solve for the values of the Zaremba function itself, are you computing the solution to the problem I describe earlier as $v_k(\xS,y)$?}

A more detailed explanation is given in the comments after Algorithm \ref{Algorithm}.
We note here that changing a boundary part from the Dirichlet boundary condition to the Neumann one, the associated Laplace eigenvalue $\lambda_j^{\GDir}$ decreases, according to Proposition \ref{prop:lambda<lambda'}, and thus the characteristic value $\sqrt{\lambda_j^{\GDir}}$ decreases as well. Moreover, $\lambda_j^{\GDir}$ is between the Neumann and the Dirichlet eigenvalue, that is $\lambda_j^{\varnothing}\leq\lambda_j^{\GDir}\leq\lambda_j^{\del\Om}$. Increasing boundary length enough, we eventually hit the target characteristic value $k_\star$, because $\cup_{j=1}^\infty \Big(\lambda_j^{\varnothing}, \lambda_j^{\del\Om}\Big) = (0, \infty)$, since $\lambda_{j+1}^{\varnothing} < \lambda_j^{\del\Om}$, proved in \cite{Filonov}.

\begin{breakablealgorithm}
\caption{Finding an intensity maximizing partition of the boundary}\label{Algorithm}
\hspace*{\algorithmicindent} \textbf{Input:} $\eps>0$, $\xS\in\Om$, $y\in\Om$, $y\neq \xS$, $k_\star>0$, $C_{\mathrm{tol}}>0$. \\
\textbf{Require:} $\eps$ is small enough, $C_{\mathrm{tol}}$ is big enough. 
 %\hspace*{\algorithmicindent} \textbf{Output} 
\begin{algorithmic}[1] 
\State Let $\GDir\DEF\del\Om$, $\GNeu\DEF\varnothing$.
\State Find the next higher square root of the Dirichlet eigenvalue $k$ to $k_\star$.
\State Compute the value $\mathrm{Z}^k_{(\GDir, \GNeu)}(\xS,y)$ and the normal derivative of the Zaremba functions $\del_{\nu_{\cdot}}\mathrm{Z}^k_{(\GDir, \GNeu)}(\xS,\cdot), \del_{\nu_{\cdot}}\mathrm{Z}^k_{(\GDir, \GNeu)}(y,\cdot)$ associated to the partition $(\GDir, \GNeu)$ at the boundary.

\If{$\mathrm{Z}^k_{(\GDir, \GNeu)}(\xS,y)\geq 0$}
	\State Let $\mathsf{S}$ be the location of a global minima of the function $\del\Om\ni z\mapsto\big(\del_{\nu_{z}}\mathrm{Z}^k_{(\GDir, \GNeu)}(\xS,z) \cdot \del_{\nu_{z}}\mathrm{Z}^k_{(\GDir, \GNeu)}(y,z)\big)\in\RR$. %If that global minima is positive, we instead let $\mathsf{S}$ be the location of its global maxima.
\ElsIf{$\mathrm{Z}^k_{(\GDir, \GNeu)}(\xS,y) < 0$}
	\State Let $\mathsf{S}$ be the location of a global maxima of the function $\del\Om\ni z\mapsto\big(\del_{\nu_{z}}\mathrm{Z}^k_{(\GDir, \GNeu)}(\xS,z) \cdot \del_{\nu_{z}}\mathrm{Z}^k_{(\GDir, \GNeu)}(y,z)\big)\in\RR$. %If that global maxima is negative, we instead let $\mathsf{S}$ be the location of its global minima.
\EndIf

\State $(\GDir^0,\GNeu^0)\DEF (\GDir,\GNeu)$.
\While{$\textrm{True}$}
	\State Define $\GDel$ to be a boundary interval of length $2\eps$ with center $\mathsf{S}$.
	\State $(\GDir,\GNeu)\DEF (\GDir\setminus\overline{\GDel},\;\GNeu\cupdot\GDel)$.
	\State Compute the perturbed characteristic value $k$ associated to the partition $(\GDir,\GNeu)$ as described in Section \ref{subsection:EValDiffwithGenArgPrinc} or with the procedure given in \cite{Nigam2014}.
	\If{$|k-k_\star|\leq C_{\mathrm{tol}}$}
		\Return $(\GDir,\GNeu)$ 
	\ElsIf{$k_\star+C_{\mathrm{tol}}<k$}
		\State $\mathrm{BREAK}\;\mathrm{WHILE}$
	\Else
		\State $(\GDir,\GNeu)\DEF(\GDir^0,\GNeu^0)$ 
		\State $\eps\DEF\frac{\eps}{\sqrt{2}}$ 
	\EndIf
\EndWhile

%MAIN LOOOOOOOP
\State $(\GDir^0,\GNeu^0)\DEF (\GDir,\GNeu)$
\While{$\textrm{True}$}
	\State Define $\GDel$ to be the extension of the Neumann interval boundary with center $j$, extended on both sides by $\eps/2$.
	\State $(\GDir,\GNeu)\DEF (\GDir\setminus\overline{\GDel},\;\GNeu\cup\GDel)$
	\State Compute the perturbed characteristic value $k$ associated to the partition $(\GDir,\GNeu)$ as described in Section \ref{subsection:EValDiffwithGenArgPrinc}.
	\If{$|k-k_\star|\leq C_{\mathrm{tol}}$}
		\Return $(\GDir,\GNeu)$ 
	\ElsIf{$k_\star+C_{\mathrm{tol}}<k$}
		\State $(\GDir^0,\GNeu^0)\DEF (\GDir,\GNeu)$
	\Else
		\State $(\GDir,\GNeu)\DEF(\GDir^0,\GNeu^0)$
		\State $\eps\DEF{\eps}\cdot0.9$
	\EndIf
\EndWhile
\end{algorithmic} 
\end{breakablealgorithm}

In the following we give an explanation for the choices.
\begin{itemize}
	\item[Line 2:] The reason we search for the next higher Dirichlet eigenvalue originates from the fact that, according to Proposition \ref{prop:lambda<lambda'}, when we insert Neumann boundaries, the corresponding eigenvalue decreases. The search for the next higher Dirichlet characteristic value an its multiplicity might be computationally expensive.
	\item[Line 3:] Using the algorithm proposed in \cite{EldarBruno2018}, we compute the Zaremba function using the decomposition $\mathrm{Z}^k(\xS,y)= \Gk(\xS, y) + \mathrm{R}^k(\xS, y)$, where $\Gk$ is the fundamental solution to the Helmholtz equation, and $\mathrm{R}^k$ satisfies the partial differential equation (\ref{pde:ZarembaRemainder}).
	More exactly, we obtain a function $\phi_{\mathrm{R}}$ on $\del\Om$, which is of the form in Proposition \ref{prop:asympt for u and phi}, with
	\begin{align*}
		\mathrm{R}^k(y) = \int_{\del\Om} \Gk(y,z)\phi_{\mathrm{R}}(z) \intd \sigma_z\,,
	\end{align*}
	for $y\in\Om$. Using the jump relations, see \cite[Section 2.3.2]{LPTSA}, we get for $y\rightarrow\GDir$ that
	\begin{align*}
		\del_{\nu_y}\mathrm{R}^k(y) 
			= \bigg( -\frac{1}{2}\mathrm{I}_{\del\Om}+\KGGkstar\bigg)\big[ \phi_{\mathrm{R}} \big](y)\,,
	\end{align*}
	where $\mathrm{I}_{\del\Om}$ denotes the identity operator.
	
	Using a discretization to the operator $\KGGkstar$, which we also readily obtain from \cite{Nigam2014}, we can calculate $\del_{\nu_y}\mathrm{Z}^k(\xS,\cdot)= \del_{\nu_y}\Gk(\xS, \cdot) + \del_{\nu_y}\mathrm{R}^k(\xS, \cdot)$.
	\item[Line 4-8:] In view of Lemma \ref{lemma:ZN-ZD}, we obtain that if $\mathrm{Z}^k_{(\GDir, \GNeu)}(\xS,y)\geq 0$ then we need a negative value of $\del_{\nu_{z}}\mathrm{Z}^k_{(\GDir, \GNeu)}(\xS,z) \cdot \del_{\nu_{z}}\mathrm{Z}^k_{(\GDir, \GNeu)}(y,z)$ to increase $\mathrm{Z}^k_{(\GDir, \GNeu)}(\xS,y)$ and vice-verca for $\mathrm{Z}^k_{(\GDir, \GNeu)}(\xS,y)\leq 0$. Taking the minima, respectively the maxima, we increase the absolute value of $\mathrm{Z}^k_{(\GDir, \GNeu)}(\xS,y)$. 
	
	We note that Lemma \ref{lemma:ZN-ZD} only holds for the case where $\Om$ is the unit circle, but we assume that it holds for all domains with smooth boundaries. We think, this can be established expanding the operator in \cite[Theorem 5.4]{HRMetasurfaceOnArxiv}.
	 
	From Theorem \ref{thm:ZarembaDecomp} we know that the Zaremba function is real valued, but due to numerical cancellation errors, the Zaremba function might have a non-zero imaginary part.
	
	In our numerical experiments, it always holds that a global minima is negative and a global maxima is positive, respectively. But we do not know if this holds true in general.
	\item[Line 10:] In this \textit{while}-loop we change a boundary interval with center $\mathsf{S}$ and length $2\eps$ into a Neumann Boundary condition. Then we compute an approximation $k$ to the new characteristic value. If $|k-k_\star|<C_{\mathrm{tol}}$, we end the algorithm, if $k+C_{\mathrm{tol}}<k_\star$, we break the \textit{while}-loop, and in the remaining case we decrease $\epsilon$ and go through the loop again.
	\item[Line 13:] To compute an approximation to the new characteristic value, which is smaller than $k$, we use the approximation stated in Proposition \ref{prop:keps-ko good approx}. To this end, we use as the complex domain $V$ encircling $k$ and $k_\star$ an ellipse with center $(k+k_\star)/2$ and semi-major axis $(k-k_\star)\cdot0.55$ and semi-minor axis $(k-k_\star)\cdot0.1$, which is to avoid complex characteristic values. Those factors are chosen due to good numerical results. A discretization to the operator $\calAk$ is computed using the algorithm described in \cite{Nigam2014}. For the complex derivative of $\calAk$, we used the rough approximation $(\calA(w+0.01)-\calA(w))/0.01$. The integral is approximated with a inbuilt-process.
	The approximation may yield the same result as the former characteristic value, that is $k$. In that case, the new characteristic value is not within $V$, which happens when the new boundary interval with Neumann boundary conditions is too long, or cannot be detected by the approximation.
	
	Here it might very well be that $k$ is not a simple eigenvalue, but instead for example a double eigenvalue, which occurs for $\Om$ being the unit circle. Then we search for both new eigenvalues and pick the one closer to $k_\star$, but still larger than $k_\star$. This search costs more time than the approximation algorithm.
	
	In numerical experiments it seems that the two eigenvalues of the double Dirichlet eigenvalue split such that one eigenvalue escapes subjectively faster from the double Dirichlet eigenvalue the longer the new boundary interval $\GDel$ is and the other eigenvalue subjectively slower. This is reminiscent of the behavior of the perturbation of a double eigenvalue in \cite{Dabrovski2017}, where the eigenvalue splits in an eigenvalue with difference $\OO(\eps^2)$ and an eigenvalue with difference $\OO(1/|\log(\eps)|)$, where $\eps$ is a value associated to the perturbation.
%	\item[Line 19:] The factor $1/\sqrt{2}$ is chosen, due to good numerical results for minimizing runtime, but can be any real number in $(0,1)$.
	\item[Line 23:] Next, we expand the boundary interval, which we established in Line 10-21. We expand it on both ends by a length $\eps/2$, whose factor $1/2$ is again chosen due to good numerical approval for minimizing runtime. Then we compute an approximation $k$ to the new characteristic value. If $|k-k_\star|<C_{\mathrm{tol}}$, we end the algorithm, if $k+C_{\mathrm{tol}}<k_\star$, we extend the boundary interval once again, else decrease $\epsilon$.
%	\item[Line 24:] The factor $1/2$ is chosen due to good numerical approval for minimizing runtime.
	\item[Line 26:] To compute an approximation to the new characteristic value, we use the same setting as in Line 13: The complex domain $V$ encircling $k$ and $k_\star$ is an ellipse with center $(k+k_\star)/2$ and semi-major axis $(k-k_\star)\cdot0.55$ and semi-minor axis $(k-k_\star)\cdot0.1$. A discretization to the operator $\calAk$ is computed using the algorithm described in \cite{Nigam2014}. For the complex derivative of said operator we used the rough approximation $(\calA(w+0.01)-\calA(w))/0.01$. The integral is approximated with a inbuilt-process.
	
	The approximation may again yield the same result as the former characteristic value, that is $k$, this happens when $\GDel$ is too long.
	
	In this \textit{while}-loop, it never happened that $k$ is not a simple eigenvalue. 
%	\item[Line 32:] The factor $0.9$ is so chosen, because the approximation algorithm for the characteristic value, Line 26, cannot find the new characteristic value if it is too far from the previous one. Thus we took a factor near to 1 to stay close to the detection range.
\end{itemize}

\begin{remark}
	If the function $\del\Om\ni z\mapsto\big(\del_{\nu_{z}}\mathrm{Z}^k_{(\GDir, \GNeu)}(\xS,z) \cdot \del_{\nu_{z}}\mathrm{Z}^k_{(\GDir, \GNeu)}(y,z)\big)\in\RR$ oscillates strongly on the boundary it might yield better results, when multiple, but smaller, boundary intervals are applied. The thought behind this is that using one long boundary interval might intersect the disadvantageous part of the function $\del_{\nu_{z}}\mathrm{Z}^k_{(\GDir, \GNeu)}(\xS,z) \cdot \del_{\nu_{z}}\mathrm{Z}^k_{(\GDir, \GNeu)}(y,z)$ and thus decrease the intensity of $\mathrm{Z}^k_{(\GDir, \GNeu)}(\xS,y)$.
\end{remark}

%\begin{remark}
%	Theoretically, the \textit{while}-loop stating at line 23 should end at the target characteristic value $k_\star$, but practically we observe that the \textit{while}-loop gets stuck in between $k_\star$ and the next higher Dirichlet characteristic value. This is for example the case when $k_\star=15$, then the next higher Dirichlet characteristic value is approximately $15.5898$ and the characteristic value, we get stuck on, is approximately $15.3087$. 
%	
%	One way to circumvent these 'illusive'-values is by increasing the Neumann boundary by a predetermined length, and restarting the \textit{while}-loop at line 23 with the original $\epsilon$. 
%\end{remark}
%
%\begin{remark}
%	The \textit{while}-loop at line 23 increases both boundary interval ends equally. This might not be the optimal choice. As it turns out, whenever the target characteristic value $k_\star$ is too far away from the next higher Dirichlet characteristic value, it might be that the resulting Zaremba function does not have an intensity peak exactly at the point $y$, but beside. In the worst case, $y$ is closer to the nodal set than to that peak.
%	
%	To correct this, it might be enough just to move the boundary interval a little bit along the boundary. Another possibility is to derive an asymptotic expansion, as we did in Section \ref{subsection:DiffApproxZarembaFct}, but for expanding a Neumann boundary part infinitesimally. With this, we could infer which side of the boundary interval is more beneficial.
%\end{remark}
\newcommand{\raystretch}[1]{\renewcommand{\arraystretch}{#1}}

\section{Numerical Implementation and Tests}\label{Ch:NumImplTest}
Our first numerical test shows the algorithm in the best case scenario. We have the domain $\Omega=\{x\in\RR^2 \mid \NORM{x}_{\RR^2}<1\}$, the signal point $\xS = (0,0)^\mathrm{T}$,  the target characteristic value $k_\star=1$ and $C_\mathrm{tol}=10^{-3}$ and $\eps=0.1$. We remark here that the next higher Dirichlet characteristic value is a simple one at approximately $2.40482$. We let the receiving point $y\in\{ (0,r)^\mathrm{T}\in\RR^2 \mid r > 0\}$ vary. Here we want to mention that our implementation of the Zaremba function, as described in Section \ref{Ch:Algorithm}, comment on Line 3, yields a non-zero imaginary part for the Zaremba function, the same holds true for the approximation to the characteristic value $k$ as described in Section \ref{Ch:Algorithm}, comment on Line 13. We always choose the real part whenever in question. The number of discretization points for the operator $\calAk$ was $3\cdot64$. The results are displayed in Table \ref{table:1}. The Zaremba functions with Dirichlet boundary conditions and with final mixed boundary conditions, for the case $y= (0,0.5)^\mathrm{T}$, are displayed in Figure \ref{fig:Circ1}.  

\begin{figure}[h]
  \begin{subfigure}{0.49\textwidth}
    \centering
    \includegraphics[width=0.99\textwidth]{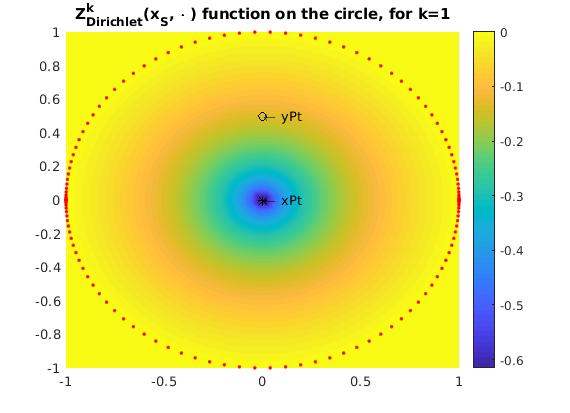}
    %\caption[CAP1]{TEXT.}
    %\label{fig:DiriCirc1}
  \end{subfigure}\hfill %half fill
  \begin{subfigure}{0.49\textwidth} 
    \centering
    \includegraphics[width=0.99\textwidth]{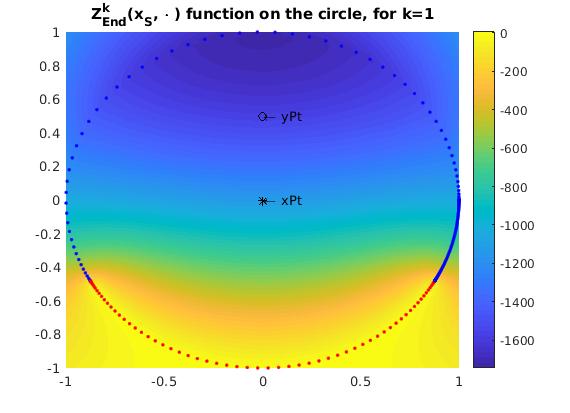}
    %\caption[CAP1]{TEXT.}
    %\label{fig:EndCirc1}
  \end{subfigure}
  \caption{The Zaremba function for $k_\star=1$ on the unit disk with Dirichlet boundary condition on the left and final mixed boundary conditions on the right. Marked are $\xS$, denoted as 'xPt', and $y$, denoted as 'yPt'. The points on the boundary are our discretization points. Blue points denote the Neumann boundary conditions, red points denote the Dirichlet boundary conditions. \label{fig:Circ1}}
\end{figure}

\begin{figure}[h]
  \begin{subfigure}{0.49\textwidth}
    \centering
    \includegraphics[width=0.99\textwidth]{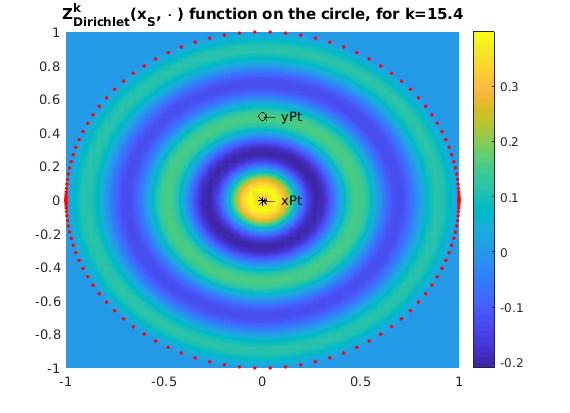}
    %\caption[CAP1]{TEXT.}
    %\label{fig:DiriCirc2}
  \end{subfigure}\hfill %half fill
  \begin{subfigure}{0.49\textwidth} 
    \centering
    \includegraphics[width=0.99\textwidth]{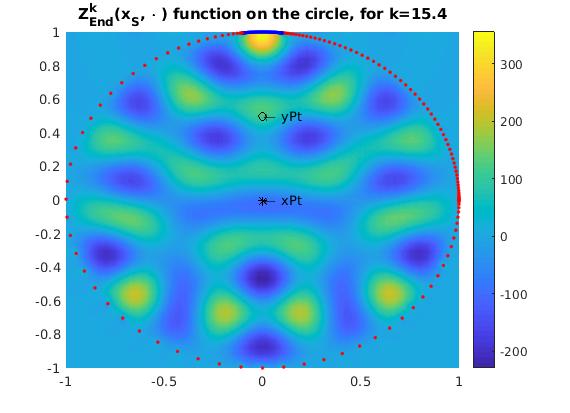}
    %\caption[CAP1]{TEXT.}
    %\label{fig:EndCirc2}
  \end{subfigure}
  \caption{The Zaremba function for $k_\star=15.4$ on the unit disk with Dirichlet boundary condition on the left and final mixed boundary conditions on the right. Further notation is as in Figure \ref{fig:Circ1}. \label{fig:Circ2}}
\end{figure}

%
%\begin{table*}\centering
%\raystretch{1.1}
%\resizebox{\columnwidth}{!}{%
%\begin{tabular}{@{}rrrrcrrr@{}}\toprule
%	& \multicolumn{3}{c}{$w = 8$}  & \phantom{abc}
%	& \multicolumn{3}{c}{$w = 16$} \\
%\cmidrule{2-4} \cmidrule{6-8}
%	& $t=0$ & $t=1$ & $t=2$ && $t=0$ & $t=1$ & $t=2$\\ 
%	\midrule $dir=1$\\
%$c$ & 0.0790 & 0.1692 & 0.2945 && 0.3670 & 0.7187 & 3.1815 \\
%$c$ & -0.8651& 50.0476& 5.9384&& -9.0714& 297.0923& 46.2143\\
%$c$ & 124.2756& -50.9612& -14.2721&& 128.2265& -630.5455& -381.0930\\
%$dir=0$\\
%$c$ & 0.0357& 1.2473& 0.2119&& 0.3593& -0.2755& 2.1764\\
%$c$ & -17.9048& -37.1111& 8.8591&& -30.7381& -9.5952& -3.0000\\
%$c$ & 105.5518& 232.1160& -94.7351&& 100.2497& 141.2778& -259.7326\\
%\bottomrule
%\end{tabular}
%}
%\caption{Caption}
%\end{table*}
%
\begin{table*}\centering
\raystretch{1.1}
\resizebox{\columnwidth}{!}{%
\begin{tabular}{@{}rrrrrr@{}}
%\cmidrule{2-4}
	& $r=0.1$ & $r=0.25$ & $r=0.5$ & $r=0.75$ & $r=0.9$\\ 
	\midrule
$\mathrm{Z}_{\text{Dirichlet}}^{k_\star}(\xS,y)$ 
	& -0.412
	& -0.261
	& -0.138 
	& -0.059 
	& -0.022 \\
$\mathrm{Z}_{\text{End}}^{k_\star}(\xS,y)$ 
	& -1288
	& -1438
	& -1634
	& -1754
	& -1788 \\
$\left|\frac{\mathrm{Z}_{\text{End}}^{k_\star}(\xS,y)}{\mathrm{Z}_{\text{Dirichlet}}^{k_\star}(\xS,y)}\right|$ 
	& 3123
	& 5503
	& 11824
	& 29623
	& 81687 \\
$\theta_{\mathrm{center}}$ 
	& $0.50 \pi$
	& $0.50 \pi$
	& $0.50 \pi$
	& $0.50 \pi$
	& $0.50 \pi$ \\
$l_{\mathrm{N}}$ 
	& $1.32 \pi$
	& $1.32 \pi$
	& $1.32 \pi$
	& $1.32 \pi$
	& $1.32 \pi$ \\
\bottomrule
\end{tabular}
}
\caption{We see Algorithm \ref{Algorithm} performing on the unit circle with $k_\star = 1$, $\xS = (0,0)^\mathrm{T}$, $y\in\{ (0,r)^\mathrm{T}\in\RR^2 \mid r > 0\}$, $C_\mathrm{tol}=10^{-3}$ and $\eps=0.1$. $\mathrm{Z}_{\mathrm{Dirichlet}}^k(\xS,y)$ represent the Zaremba function on the partition $(\del\Om,\varnothing)$ of the boundary and $\mathrm{Z}_{\mathrm{End}}^k(\xS,y)$ represents the Zaremba function on the final partition, where the final partition is made out of two boundary intervals, one with Dirichlet boundary conditions and the other with Neumann boundary conditions. $\theta_{\mathrm{center}}\in [0, 2\cdot pi)$ represents the angle of the center of the Neumann boundary intervals and $l_{\mathrm{N}}$ its length. The shown values are the real, rounded values of the numerical results.}\label{table:1}
\end{table*}

Our second numerical test shows the algorithm for a higher target characteristic value $k_\star$, namely $k_\star = 15.4$. We have as the domain $\Omega$ the unit circle $\{x\in\RR^2 \mid \NORM{x}_{\RR^2}<1\}$, as the signal point $\xS = (0,0)^\mathrm{T}$ and $C_\mathrm{tol}=10^{-3}$ and $\eps=0.05$. We remark here that the next higher Dirichlet characteristic value has multiplicity two and is at approximately $15.5898$. We let the receiving point $y\in\{ (0,r)^\mathrm{T}\in\RR^2 \mid r > 0\}$ vary. The number of discretization points for the operator $\calAk$ is $4\cdot48$. The results are displayed in Table \ref{table:2}. The Zaremba functions with Dirichlet boundary conditions and with final mixed boundary conditions, for the case $y= (0,0.5)^\mathrm{T}$, are displayed in Figure \ref{fig:Circ2}.  

%For $r=0.25$ we encounter the case where $\mathrm{Z}_{\text{Dirichlet}}^k(\xS,y)<0$ and the function $\del\Om\ni z\mapsto\big(\del_{\nu_{z}}\mathrm{Z}^k_{(\GDir, \GNeu)}(\xS,z) \cdot \del_{\nu_{z}}\mathrm{Z}^k_{(\GDir, \GNeu)}(y,z)\big)\in\RR$ has no positive maxima. Using still that maxima as center of the new boundary interval, we would not get an increased intensity but a decreased one. Thus instead of searching for the maxima we search for the global minima instead. Then the resulting Zaremba function $\mathrm{Z}_{\text{End}}^k(\xS,y)$ yields an approximately 3.5 times increase in absolute values, see Table \ref{table:2}. For $r=0.1$, we have the same problem and neither the minima nor the maxima does yield an increase in intensity, but only a decrease. One possible way to avoid a decrease might be a change in the characteristic value, but that is not examined here.

\begin{table*}\centering
\raystretch{1.1}
\resizebox{\columnwidth}{!}{%
\begin{tabular}{@{}rrrrrr@{}}
%\cmidrule{2-4}
	& $r=0.1$ & $r=0.25$ & $r=0.5$ & $r=0.75$ & $r=0.9$\\ 
	\midrule
$\mathrm{Z}_{\text{Dirichlet}}^{k_\star}(\xS,y)$ 
	& 0.341
	& -0.188 
	& 0.157 
	& -0.085 
	& 0.118 \\
$\mathrm{Z}_{\text{End}}^{k_\star}(\xS,y)$ 
	& 36.341
	& -14.271
	& 116.08
	& -15.811
	& 232.28 \\
$\left|\frac{\mathrm{Z}_{\text{End}}^{k_\star}(\xS,y)}{\mathrm{Z}_{\text{Dirichlet}}^{k_\star}(\xS,y)}\right|$ 
	& 106.6
	& 76.09
	& 739.0
	& 186.8
	& 1962 \\
$\theta_{\mathrm{center}}$ 
	& $0.50 \pi$
	& $1.90 \pi$
	& $0.50 \pi$
	& $0.46 \pi$
	& $0.50 \pi$ \\
$l_{\mathrm{N}}$ 
	& $0.064 \pi$
	& $0.064 \pi$
	& $0.064 \pi$
	& $0.064 \pi$
	& $0.064 \pi$ \\
\bottomrule
\end{tabular}
}
\caption{We see Algorithm \ref{Algorithm} performing on the unit circle with $k_\star = 15.4$, $\xS = (0,0)^\mathrm{T}$, $y\in\{ (0,r)^\mathrm{T}\in\RR^2 \mid r > 0\}$, $C_\mathrm{tol}=10^{-3}$ and $\eps=0.05$. $\mathrm{Z}_{\mathrm{Dirichlet}}^k(\xS,y)$, $\mathrm{Z}_{\mathrm{End}}^k(\xS,y)$, $\theta_{\mathrm{center}}$, and $l_{\mathrm{N}}$ are defined as in Table \ref{table:1}. The shown values are the real, rounded values of the numerical results.}\label{table:2}
\end{table*}

Our third numerical test shows the algorithm for a different domain $\Om$ namely a kite-shaped domain given by the following description for its boundary
\begin{align*}
	\begin{bmatrix}
		\cos(\tau)+0.65\cdot\cos(2\cdot\tau)-0.65 \\
     	1.5\cdot\sin(\tau)
	\end{bmatrix}\,,
\end{align*}
for $\tau\in[0, 2\pi)$. The target characteristic value is $k_\star = 1.5$. The signal point $\xS = (-1.25,1.25)^\mathrm{T}$ and receiving point $y= (-1.25,-1.25)^\mathrm{T}$. $C_\mathrm{tol}=10^{-2}$ and $\eps=0.05$. We remark here that the next higher Dirichlet characteristic value has multiplicity one and is at approximately $2.2099$. The number of discretization points for the operator $\calAk$ is $4\cdot48$. The result is displayed in Figure \ref{fig:Kite1}. 
The center of the Neumann boundary condition $\GNeu$ is at $(-1.191,-1.493)^\mathrm{T}$ with length $\approx 3.119$. $\mathrm{Z}_{\text{Dirichlet}}^k(\xS,y)\approx -4.05\,\cdot\,10^{-5}$ and $\mathrm{Z}_{\text{End}}^k(\xS,y)\approx -39.38$

In Figure \ref{fig:Kite2}, we have the same set-up but for $k_\star = 11.5$, with the next higher Dirichlet characteristic value around $11.6507$. Here, the center of the Neumann boundary condition $\GNeu$ is at $(-1.142, 0.641)^\mathrm{T}$ with length $\approx 0.632$. $\mathrm{Z}_{\text{Dirichlet}}^k(\xS,y)\approx 0.148$ and $\mathrm{Z}_{\text{End}}^k(\xS,y)\approx 1.68$.

\begin{figure}[h]
  \begin{subfigure}{0.49\textwidth}
    \centering
    \includegraphics[width=0.99\textwidth]{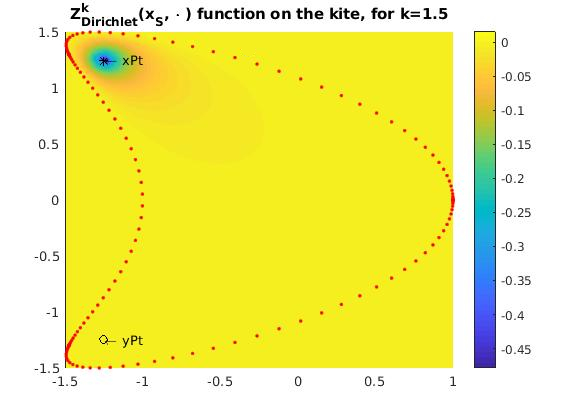}
    %\caption[CAP1]{TEXT.}
    %\label{fig:DiriKite1}
  \end{subfigure}\hfill %half fill
  \begin{subfigure}{0.49\textwidth} 
    \centering
    \includegraphics[width=0.99\textwidth]{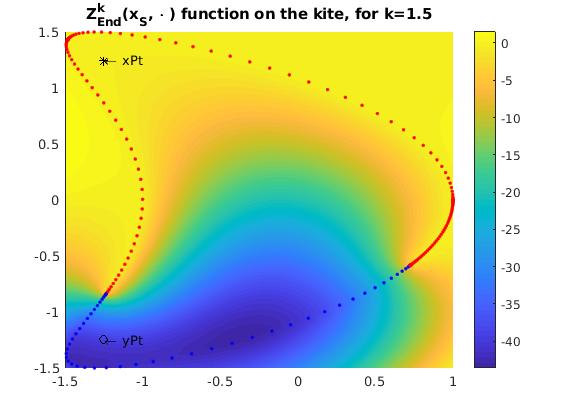}
    %\caption[CAP1]{TEXT.}
    %\label{fig:EndKite1}
  \end{subfigure}
  \caption{The Zaremba function for $k_\star=2$ on the kite shape with Dirichlet boundary condition on the left and final mixed boundary conditions on the right. Further notation is as in Figure \ref{fig:Circ1}. }\label{fig:Kite1}
\end{figure}

\begin{figure}[h]
  \begin{subfigure}{0.49\textwidth}
    \centering
    \includegraphics[width=0.99\textwidth]{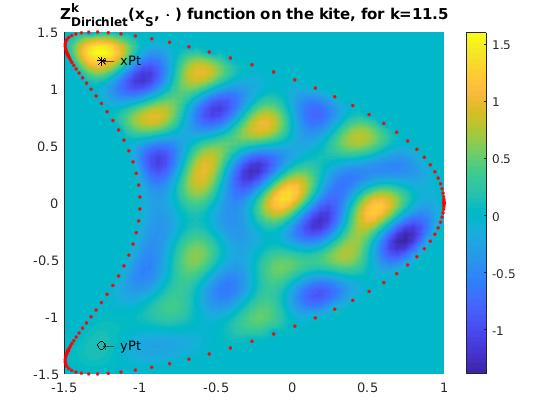}
    %\caption[CAP1]{TEXT.}
    %\label{fig:DiriKite2}
  \end{subfigure}\hfill %half fill
  \begin{subfigure}{0.49\textwidth} 
    \centering
    \includegraphics[width=0.99\textwidth]{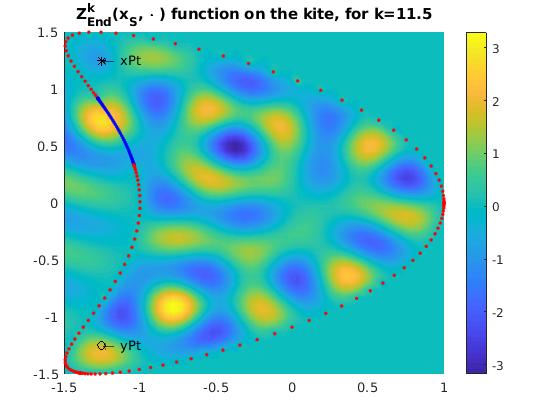}
    %\caption[CAP1]{TEXT.}
    %\label{fig:EndKite2}
  \end{subfigure}
  \caption{The Zaremba function for $k_\star=11.5$ on the kite shape with Dirichlet boundary condition on the left and final mixed boundary conditions on the right. Further notation is as in Figure \ref{fig:Circ1}.}\label{fig:Kite2}
\end{figure}

%\appendix

%\input{AppendixA}
%\input{AppendixB}
%\backmatter

\bibliographystyle{plain}
\bibliography{refs}

\end{document}